\newif\if@restonecol
\tiny\color{gray},
\begin{document}

\hypersetup{breaklinks=true}

\preprint{APS/123-QED}

\title{Unveiling NPT bound problem: From Distillability Sets to Inequalities and Multivariable Insights}

\author{Si-Yuan Qi}
\affiliation{CAS Key Laboratory of Quantum Information, University of Science and Technology of China, Hefei, Anhui 230026, P. R. China}
\affiliation{CAS Center For Excellence in Quantum Information and Quantum Physics, University of Science and Technology of China, Hefei, Anhui 230026, P. R. China}

\author{Geni Gupur}
 \email{genigupur@vip.163.com}
\affiliation{College of Mathematics and Systems Science, Xinjiang University, Urumqi, Xinjiang 830017, People's Republic of China}

\author{Yu-Chun Wu}
 \email{wuyuchun@ustc.edu.cn}
\affiliation{CAS Key Laboratory of Quantum Information, University of Science and Technology of China, Hefei, Anhui 230026, P. R. China}
\affiliation{CAS Center For Excellence in Quantum Information and Quantum Physics, University of Science and Technology of China, Hefei, Anhui 230026, P. R. China}
\affiliation{Institute of Artificial Intelligence, Hefei Comprehensive National Science Center, Hefei, Anhui 230088, People’s Republic of China}

\author{Guo-Ping Guo}
\affiliation{CAS Key Laboratory of Quantum Information, University of Science and Technology of China, Hefei, Anhui 230026, P. R. China}
\affiliation{CAS Center For Excellence in Quantum Information and Quantum Physics, University of Science and Technology of China, Hefei, Anhui 230026, P. R. China}
\affiliation{Institute of Artificial Intelligence, Hefei Comprehensive National Science Center, Hefei, Anhui 230088, People’s Republic of China}
\affiliation{Origin Quantum Computing Hefei, Anhui 230026, P. R. China}

\date{\today}

\newtheorem{theorem}{Theorem}[section]
\newtheorem{lemma}[theorem]{Lemma}
\newtheorem{proposition}[theorem]{Proposition}

\theoremstyle{definition}
\newtheorem{definition}[theorem]{Definition}

\newcommand{\RomanNumeralCaps}[1]
    {\MakeUppercase{\romannumeral #1}}

\setcounter{MaxMatrixCols}{20}

\begin{abstract}
Equivalence between Positive Partial Transpose (PPT) entanglement and bound entanglement is a long-standing 
open problem in quantum information theory.
So far limited progress has been made, even on the seemingly simple case of Werner states bound entanglement.
The primary challenge is to give a concise mathematical representation of undistillability. 
To this end, we propose a decomposition of the N-undistillability verification into $log(N)$ repeated steps of 1-undistillability verification.
For Werner state N-undistillability verification, 
a bound for N-undistillability is given, which is independent of the dimensionality of Werner states. 
Equivalent forms of inequalities
for both rank one and two matrices are presented, before transforming the two-undistillability
case into a matrix analysis problem. 
A new perspective is also attempted by seeing it as a non-convex multi-variable function,
proving its critical points and conjecturing Hessian positivity, which would make them local minimums.
\end{abstract}

\maketitle

\section{Introduction}

Entanglement \cite{RevModPhys.81.865} is what distinguished the quantum world from classical worlds. 
Unfortunately, our understanding of it remains quite limited to this day.
Entanglement exhibits varying degrees of strength, the highest of which falls into a subclass called 
``maximal entanglement''. Bell state is one such example. No consensus has yet been reached about what 
qualifies as a ``weak entanglement''. Quantum states with small Schmidt numbers are generally seen as weakly entangled, 
since quantum states with Schmidt number 1 are separable. 
Undistillable entanglement is another kind of widely accepted
weak entanglement. Relationships between these two kinds of weak entanglements are discussed in 
\cite{Huber_2018}.

Entanglement distillation \cite{Bennett_1996} is the process of 
obtaining pure Bell states only by Local Operations and Classical Communication (LOCC) 
from multiple copies of an entangled bipartite state. If N copies of a given 
entangled bipartite state can be transformed into pure Bell states through 
LOCC, then the entangled bipartite state is referred to as N-distillable. If not, then it's N-undistillable. 
If a particular entangled bipartite state is N-undistillable for arbitrary N, 
it is considered undistillable and is also referred to as a bound entangled state.

Another property concerned here is ``Positive Partial Transpose'' (PPT). PPT refers to a bipartite 
quantum state whose
partial transpose is positive semidefinite.
It is established in \cite{9801069} that Positive Partial Transpose (PPT) entanglement implies bound entanglement,
and therefore weakly entangled in terms of distillability.
However, the question of whether the converse is true remains an open problem \cite{horodecki2020open}. 
This problem is usually referred to as the NPT bound problem, or the distillability problem.
The popular conjecture regarding this problem is that the converse is not true, 
and that NPT bound entanglement exists. A number of special circumstances
have been worked out in \cite{Chen_2016_2, Chen_2023dam, Chen_2021, qian2019matrix, ding2023entanglement}.
Indirect approaches to this problem include expansion of distillation operations to 
k-extendible operations \cite{pankowski2013entanglement}, catalysis-assisted distillation \cite{lami2023catalysis},
and dually non-entangling and PPT-preserving channels \cite{Chitambar_2020}, linking
the problem with squashed entanglement \cite{Brand_o_2011} and linear preserver \cite{Johnston_2011},
studying the transformation of bound entangled states under certain dynamic process 
\cite{sharma2016dzyaloshinskiimoriya}, or considering the problem
in a broader scenario such as  
hyperquantum states \cite{van_der_Eyden_2022}.

In three separate attempts to directly tackle the distillability problem \cite{9910026}, \cite{9910022} and \cite{Chen_2016}, 
they each had a subset of quantum
states singled out,
and proved that for any finite number of copies N, there exist undistillable states in this subset. However, 
as N approaches infinity, the subset shrinks to emptiness. It has been shown in \cite{0312123} that N-undistillability does not
imply (N+1)-undistillability, by demonstrating a set of distillable-only-by-arbitrarily-large-number-of-copies state,
thus ruling out the easy route of concentrating on few-copies distillation.

For the Werner states \cite{PhysRevA.40.4277}, a family of single-parameter states,
the existence of NPT bound states implies existence of NPT bound Werner states \cite{9708015}.
Therefore, it is sufficient to only study Werner states.
Due to the inherent complexity of addressing N-copy Werner state distillability, over the last decade,
direct attacks on the problem have been primarily
fixating on the seemingly straightforward problem of 2-copy Werner state distillability, but so far only 
limited progress has been made.
In an attempt to attack the 2-copy Werner state distillability problem,
the special case of $4\times 4$ bipartite states is considered in \cite{5508622}, in which case the distillability
condition can be reduced to a convenient form of ``half-property'', and subsequently reformulated the distillability
problem into a matrix analysis problem. For a special case where the matrices are normal, a proof was given.

In \cite{1003.4337} the problem is reformulated and established that for a $d\times d$ bipartite Werner state,
the 2-copy Werner state undistillability problem
is equivalent to a certain $2d^2 \times 2d^2$ matrix being positive semidefinite. Positive definiteness
of the block matrices in the upper-left and lower-right corners has been proved.

In this work, we introduce a method that decomposes bound entanglement verification into repeated steps of 
1-undistillability verification, before
giving a new bound for N-undistillability that is independent of quantum state dimensions $d$, and therefore
different from that given in \cite{9910026}, \cite{9910022} and \cite{Chen_2016}. However, similar
to the problem encountered in \cite{9910026}, \cite{9910022} and \cite{Chen_2016}, our set of N-undistillable states 
also shrinks to emptiness for $N\to \infty$.
Also, an equivalent matrix analysis inequality form is presented, which is applicable to all finite
dimensionalities, as opposed to the case
in \cite{5508622} when only the $d=4$ case is considered. Another equivalent version of inequalities regarding only
rank-one matrices is also included, which takes on a form reminiscent of Cauchy-Schwartz inequality.
Finally, we take on a brand-new perspective of considering it as a multi-variable function problem,
finding critical points, proving non-convexity and conjecturing Hessian positivity.

We will be using $\mathcal{H}_A$ and $\mathcal{H}_B$ to denote Hilbert spaces pertaining
to A and B subsystems, and $X_i$ to denote matrices of rank less than or equal to $i$. 
Throughout the paper we 
will be considering multi-partite quantum states with identical
dimensions, i.e. quantum states on $\mathcal{H}_A\otimes \mathcal{H}_B$, where $dim(\mathcal{H}_A) = dim(\mathcal{H}_B)=d$.
This is justified by noticing that any quantum state of $d_1\times d_2$ can be equivalently transformed
into a quantum state of identical dimensions $max\{d_1, d_2\}\times max\{d_1, d_2\}$, with all unnecessary
elements set to zero.

The structure of this paper is as follows:
In Section \RomanNumeralCaps{2}, we 
introduce a matrix transformation that takes 
N-distillability problem into a higher dimensional 1-distillability problem.
In Section \RomanNumeralCaps{3}, the process of $N$-undistillability verification is transformed
into $log(N)$ repeated steps of 1-undistillability verification.
In Section \RomanNumeralCaps{4}, upper bounds on $N$-undistillability are given, 
yielding a result similar to that of \cite{9910026}, \cite{9910022} and \cite{Chen_2016},
effectively finding shrinking families
of N-undistillable quantum states.
In Section \RomanNumeralCaps{5}, equivalent partial trace inequalities are presented before turning them into a matrix
analysis problem.
The rank-two matrix analysis problem is also reduced to a 
rank-one matrix analysis problem, resulting in a Cauchy-Schwartz inequality look-alike.
In Section \RomanNumeralCaps{6}, we take on a new perspective by treating the problem as 
a multi-variable function, proving critical points, non-convexity and conjecturing 
positive-semidefinite Hessians.

\section{Statement of the distillability problem}
The open problem of equivalence between PPT entanglement and bound entanglement can be resolved by 
finding bound entangled NPT states.
Since Peres \cite{9604005} has already shown that NPT is a sufficient condition for entanglement, attention 
can be limited to the state being both ``bound'', which is, N-undistillable for arbitrarily large N, and
NPT, which is, having a negative partial transpose. 
Compared to ascertaining positive semidefiniteness
of a certain matrix, the major difficulty usually lies in finding a concise mathematical
interpretation for undistillability, and asserting it for two types of 
arbitrariness: arbitrary LOCC operation and arbitrarily finitely many copies of the entangled state. 
This will be the main focus of this section,
and the analysis of mathematical structure will eventually lead to a transformation that reduces 
N-undistillability verification to 1-undistillability verification of a transformed state.

Let's recall the concept of Schmidt decomposition, Schmidt coefficients and Schmidt rank \cite{Nielsen_Chuang_2010}.
Schmidt decomposition of a pure state $|\psi\rangle$ in 
$\mathcal{H}_A\otimes \mathcal{H}_B$ is:

\begin{equation}
    |\psi\rangle = \sum_{i=1}^{d} \alpha_i |u_i v_i\rangle,
\end{equation}
where $\mathcal{H}_A$ and $\mathcal{H}_B$ are both d-dimensional Hilbert spaces, with 
$\{|u_1\rangle, \cdots, |u_d\rangle\}$ and $\{|v_1\rangle, \cdots, |v_d\rangle\}$ as their respective
orthonormal bases. Schmidt coefficients $\alpha_i$ are real, nonnegative and unique up to re-ordering.
The number of non-zero Schmidt coefficients is called Schmidt rank of pure state $|\psi\rangle$. 
$SR(|\psi\rangle)$ denotes the Schmidt rank of $|\psi\rangle$.

A bipartite quantum state is called N-distillable if pure Bell states can be obtained using 
LOCC, from N-copies of the state. 
It has been proven in \cite{PhysRevLett.78.574} that all entangled $2\times 2$ bipartite quantum states are distillable,
therefore, all we need to do is transform the original quantum state into a $2\times 2$ entangled state by LOCC.
This is demonstrated in the following result:
A bipartite quantum state $\rho$ on 
$\mathcal{H}_A\otimes \mathcal{H}_B$ is N-undistillable
iff for any $d^N\times d^N$ bipartite
pure state $|\psi^{SR\leq 2}\rangle$ on $\mathcal{H}_A^{\otimes N}\otimes \mathcal{H}_B^{\otimes N}$ 
with Schmidt rank $\leq$ 2, the following holds:
\begin{equation}\label{gen2}
    \langle \psi^{SR\leq 2} | (\rho^{T_A})^{\otimes N} |\psi^{SR\leq2}\rangle \geq 0,
\end{equation}
where $\rho^{T_A} = (T\otimes I)\rho$ is the partial transpose of $\rho$, with $T$ being the transpose operator.
The above expression overlooks the pairing between different subsystems and so an additional operator 
$M_N$ from $(\mathcal{H}_A\otimes\mathcal{H}_B)^{\otimes N}$ 
to $\mathcal{H}_A^{\otimes N}\otimes \mathcal{H}_B^{\otimes N}$ is introduced to make it meaningful:

\begin{definition}
    Operator $M_N: (\mathcal{H}_A\otimes\mathcal{H}_B)^{\otimes N} 
    \to \mathcal{H}_A^{\otimes N}\otimes \mathcal{H}_B^{\otimes N}$ is
    defined as:
    \begin{equation}
        M_N = \sum
        |i_{A_1}\dots i_{A_N}i_{B_1}\dots i_{B_N}\rangle
        \langle i_{A_1}i_{B_1}\dots i_{A_N}i_{B_N}|,
    \end{equation}
    where $\mathcal{H}_A^{\otimes N} = \mathcal{H}_{A_1}\otimes \mathcal{H}_{A_2}\otimes \cdots\otimes \mathcal{H}_{A_N}$
, $\mathcal{H}_B^{\otimes N} = \mathcal{H}_{B_1}\otimes \mathcal{H}_{B_2}\otimes \cdots\otimes \mathcal{H}_{B_N}$, 
and $i_{A_j}, i_{B_j}$ stands for the index corresponding to system $A_j, B_j$. 
\end{definition}

Operator $M_N$ then takes the $A_1 B_1 A_2 B_2 \cdots A_N B_N$ structure of $(\rho^{T_A})^{\otimes N}$
and merges their A-subsystems and B-subsystems,
resulting in a structure of $A_1 A_2 \cdots A_N B_1 B_2 \cdots B_N$ that suits the pure quantum state,
thus explicitly demonstrating the process of system pairing which is essential to any further calculations
derived from this representation of undistillability.

For better representation of undistillability and further use in following sections,
we use a certain state-operator isomorphism introduced in \cite{5508622}.
We notice that it is in essence a Choi-Jamiokowski isomorphism followed by a transposition on the resulting matrix.
For convenience and consistency with previous works, we stick to this ``transposed Choi-Jamiokowski isomorphism''
when presenting our findings.

\begin{definition}[State-operator isomorphism \cite{5508622}]\label{iso}
    A state-operator isomorphism $\Psi$ is one that takes
    a pure bipartite state $|\psi\rangle$ of:
\begin{equation}
    |\psi\rangle = \sum_{ij}\alpha_{ij}|ij\rangle,
\end{equation}
to a matrix form of:
\begin{equation}
    \Psi(|\psi\rangle) = \begin{pmatrix}
        \alpha_{00} & \alpha_{01} &  \cdots & \alpha_{0, d-1}\\
        \alpha_{10} & \alpha_{11} &  \cdots & \alpha_{1, d-1}\\
        \vdots & \vdots & \ddots & \vdots \\
        \alpha_{d-1, 0} & \alpha_{d-1, 1} &  \cdots & \alpha_{d-1, d-1}\\
    \end{pmatrix}.
\end{equation}
\end{definition}

Note that Schmidt decomposition of a pure quantum state $|\psi\rangle$ is related to the singular value decomposition
of $\Psi(|\psi\rangle)$ through state-operator isomorphism. In fact, Schmidt coefficients of $|\psi\rangle$
is the same as the singular values of $\Psi(|\psi\rangle)$, and Schmidt rank of $|\psi\rangle$ is equal to 
the rank of $\Psi(|\psi\rangle)$.

The following is a more concise mathematical representation of N-undistillability:
A bipartite quantum state $\rho$ is N-undistillable iff:
\begin{equation}\label{maineq}
    [\Psi^{-1}(X_2)]^{\dagger} M_N(\rho^{T_A})^{\otimes N}M_N^{\dagger} \Psi^{-1}(X_2) \geq 0,
\end{equation}
where $X_2$ is any $d^N\times d^N$ matrix of rank $\leq 2$.
Note that $\Psi^{-1}(X_2)$ is equivalent to $|\psi^{SR\leq 2}\rangle$, 
a pure bipartite quantum state with Schmidt rank no larger than 2.
For the case of $N=1$, 1-undistillability is:
\begin{equation}
    [\Psi^{-1}(X_2)]^{\dagger} \rho^{T_A} \Psi^{-1}(X_2) = 
    \langle \psi^{SR\leq 2}| \rho^{T_A} |\psi^{SR\leq 2}\rangle \geq 0.
\end{equation}

Effectively, the problem of $\rho$ 
being N-copy undistillable is equivalent to the problem of 
$M_N(\rho^{\otimes N})M_N^{\dagger}$ being 1-undistillable, but on a Hilbert space with higher
dimensions $\mathcal{H'}_A \otimes \mathcal{H'}_B$, $dim(\mathcal{H'}_A) = dim(\mathcal{H'}_B) = d^N$.

Since discussions of 1-copy distillability problem rarely involves specification of finite dimensional Hilbert space 
dimensionality, and most of the theorems and methods are applicable to arbitrary dimensions, this transformation
poses an advantage.
In the next section, a specific symmetry of $M_{2^N}$ transformations is utilized for decomposing the increasingly complicated 
structure of $M_{2^N}$ into identical steps repeated for $N$ times. In fact, verification of
$2^N$-undistillability is enough to assert bound entanglement.

\section{N-undistillability problem as repeated steps}

From the observation that (N+1)-undistillability implies N-undistillability, we can get the following result 
which relaxes the N-undistillability requirement for all N to only an infinite sequence of N diverging to infinity:
\begin{proposition}
    $\{g_i\}$ is a sequence of positive real numbers diverging to infinity.
    A bipartite entangled quantum state is $g_i$-undistillable for arbitrary i, iff the state is
    bound entangled.
\end{proposition}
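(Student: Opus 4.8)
The plan is to prove the proposition by a simple two-direction argument, leveraging only the monotonicity fact already recalled in the text, namely that $(N+1)$-undistillability implies $N$-undistillability. First I would observe that the "only if" direction is immediate: if a state is bound entangled, then by definition it is $N$-undistillable for every positive integer $N$, and in particular $g_i$-undistillable for each $i$ (rounding $g_i$ up to the nearest integer if one wishes to keep the index integral, though undistillability for a real "number of copies" is naturally read as undistillability for the integer $\lceil g_i\rceil$, or equivalently one may restrict $\{g_i\}$ to integer sequences without loss of generality).

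For the "if" direction, suppose the state is $g_i$-undistillable for every $i$, and fix an arbitrary positive integer $M$. Since $\{g_i\}$ diverges to infinity, there exists an index $i_0$ with $g_{i_0} \geq M$. The state is $g_{i_0}$-undistillable by hypothesis. Then I would iterate the contrapositive of the monotonicity statement: $(N+1)$-undistillability implies $N$-undistillability, hence by a trivial downward induction from $\lfloor g_{i_0}\rfloor$ down to $M$, the state is $M$-undistillable. Since $M$ was arbitrary, the state is $N$-undistillable for all $N$, i.e. bound entangled.

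I do not anticipate a genuine obstacle here; the content of the proposition is essentially a restatement of the monotonicity property combined with the elementary fact that a sequence diverging to infinity is eventually larger than any fixed bound. The only point deserving a line of care is the mild bookkeeping around $g_i$ being real rather than integer-valued: one should either state at the outset that "$g_i$-undistillable" is shorthand for "$\lceil g_i\rceil$-undistillable", or note that the proposition is used downstream only with the integer sequence $g_i = 2^i$, so the real-valued generality is cosmetic. With that convention fixed, the proof is two short paragraphs and requires no calculation beyond invoking $(N+1)\text{-undistillable} \Rightarrow N\text{-undistillable}$ finitely many times.
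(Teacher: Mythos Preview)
Your argument is correct and is exactly the approach the paper takes: the paper does not give a detailed proof but simply presents the proposition as an immediate consequence of the observation that $(N+1)$-undistillability implies $N$-undistillability. Your two-direction write-up (including the remark that only the integer case $g_i=2^i$ is actually used downstream) faithfully expands what the paper leaves implicit.
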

In the 2-copy case, $M_2$ transformation is simply:
\begin{equation}\label{defe}
    E = M_2 = \sum
    |i_{A_1}i_{A_2}i_{B_1}i_{B_2} \rangle\langle i_{A_1}i_{B_1}i_{A_2}i_{B_2}|,
\end{equation}
which is merely an exchange of the second and third parts of $B_1$ and $A_2$.
From now on we will always refer to the 2-copy case of $M_2$ as $E$, emphasizing that it is 
only a simple exchange as opposed to complicated $M_N$ of larger copy numbers.

According to Eq.\ref{maineq},
the quantum state $\rho$ is 2-undistillable iff $E(\rho\otimes\rho)E^{\dagger}$ is 1-undistillable.

In the 4-copy case, $M_4$ can actually be decomposed into two steps:
\begin{equation}
    \begin{aligned}
    &A_1 B_1 A_2 B_2 A_3 B_3 A_4 B_4 \\
    \to &A_1 A_2 B_1 B_2 A_3 A_4 B_3 B_4 \\
    \to &A_1 A_2 A_3 A_4 B_1 B_2 B_3 B_4.
    \end{aligned}
\end{equation}

Step 1: do the 2-copy exchange on subsystems 1,2 ($A_1 B_1 A_2 B_2\to A_1 A_2 B_1 B_2$) 
and subsystems 3,4 ($A_3 B_3 A_4 B_4\to A_3 A_4 B_3 B_4$).
This step is no different than in the 2-copy case, except that it is done on two 
systems simultaneously. Group every neighbouring 2 subsystems(in terms of $A, B$),
so that the eight parts are now seen as 
four parts ($A_1 A_2, B_1 B_2, A_3 A_4, B_3 B_4$
to $ABAB$). In order to separate
this newly obtained four-part state from 2-copy Werner states, we denote this state $\rho(e)\otimes \rho(e)$,
where
$\rho(e) = E(\rho \otimes \rho)E^{\dagger}$.

Step 2: 
Do the 2-copy exchange $E$ again, obtaining $\rho(e^2) = E(\rho(e)\otimes \rho(e))E^{\dagger}$, where
$E$ is taken with regard to the ``merged'' parts. 
This step takes once-transformed Werner states
into twice-transformed Werner states. 

The above steps of repeating the $E$ transformation twice, is equivalent to doing the $M_4$ translations
directly. The original 
state $\rho$ is 4-copy undistillable iff $\rho(e^2)$ is one-copy undistillable.

Similarly, in the $2^N$-copy case, E is decomposed into N steps,
with each step taking state $\rho(e^{i-1})\otimes \rho(e^{i-1})$, 
producing a state $\rho(e^i) = E(\rho(e^{i-1})\otimes \rho(e^{i-1}))E^{\dagger}$, 
verifying its positivity on Schmidt rank $\leq 2$ states and 
effectively ascertaining $2^i$-copy undistillability,
before passing two copies of this state on, as a starting state for the next step.
The following theorem demonstrates the process.
\begin{theorem}
    Define $\rho(e^0)=\rho$ to be the initial state, and
    \begin{equation}
        \rho(e^k) = E(\rho(e^{k-1})\otimes \rho(e^{k-1}))E^{\dagger},
    \end{equation}
    then $\rho(e^k)$ being 1-undistillable is equivalent to $\rho$ being $2^k$-undistillable.
\end{theorem}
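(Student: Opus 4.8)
The plan is to prove the statement by induction on $k$, reducing the case of $2^k$ copies to two nested applications of the $2$-copy result, which is itself a direct consequence of Eq.~\eqref{maineq} specialized to $N=2$ (where $M_2 = E$). The base case $k=0$ is trivial: $\rho(e^0) = \rho$, and "$\rho$ is $1$-undistillable" is tautologically "$\rho$ is $2^0$-undistillable." For the inductive step, I would assume the claim holds for $k-1$ and all bipartite states, and apply it not to $\rho$ but to the transformed state $\sigma := \rho(e^1) = E(\rho\otimes\rho)E^\dagger$, which lives on $\mathcal{H}'_A\otimes\mathcal{H}'_B$ with $\dim = d^2$. Since the definitions $\rho(e^k)$ are set up recursively by the same rule, one checks that $\sigma(e^{k-1}) = \rho(e^k)$. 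The induction hypothesis then gives: $\rho(e^k) = \sigma(e^{k-1})$ is $1$-undistillable $\iff$ $\sigma$ is $2^{k-1}$-undistillable.

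It then remains to establish the key link: $\sigma = E(\rho\otimes\rho)E^\dagger$ is $m$-undistillable $\iff$ $\rho$ is $2m$-undistillable, applied here with $m = 2^{k-1}$. I would prove this by unfolding the definition of $M_N$ and checking the operator identity $M_{2m}\,(\rho^{T_A})^{\otimes 2m}\,M_{2m}^\dagger = M'_m\,(\sigma^{T_A})^{\otimes m}\,(M'_m)^\dagger$ up to the natural reordering isometry, where $M'_m$ is the pairing operator on the grouped systems and $\sigma^{T_A}$ is computed from $(\rho\otimes\rho)^{T_A}$ after the $E$-exchange. The content is exactly the "decomposition of $M_4$ into two steps" illustrated in the text, generalized: the $2m$-fold pairing $A_1B_1\cdots A_{2m}B_{2m} \to A_1\cdots A_{2m}B_1\cdots B_{2m}$ factors through the intermediate grouping into $m$ blocks of two-copy-exchanged systems. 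Since undistillability on $n$ copies is, by Eq.~\eqref{maineq}, positivity of $[\Psi^{-1}(X_2)]^\dagger M_n(\rho^{T_A})^{\otimes n}M_n^\dagger \Psi^{-1}(X_2)$ ranging over all rank-$\le 2$ matrices $X_2$, and the reordering isometry carries rank-$\le 2$ states to rank-$\le 2$ states bijectively, the two positivity conditions coincide.

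The main obstacle I anticipate is bookkeeping the tensor-factor permutations carefully enough to make the factorization $M_{2m} = (\text{reorder}) \circ M'_m \circ (E^{\otimes m})$ (in the appropriate sense) actually rigorous, and in particular verifying that the partial transpose $T_A$ commutes correctly with $E$ so that $(\sigma)^{T_A}$ really equals $E\,(\rho^{T_A}\otimes\rho^{T_A})\,E^\dagger$ under the grouped bipartition — i.e., that the $A/B$ split is respected by the exchange $E$. Once that commutation is pinned down, the rest is a matter of matching indices. A secondary point to handle with care is that the induction is applied to a \emph{family} of states (all dimensions), so the inductive hypothesis must be stated for arbitrary bipartite $\rho$, not just Werner states; this is already the form in which the theorem is phrased, so no extra work is needed beyond noting it explicitly.
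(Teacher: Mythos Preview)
Your proposal is correct and follows the same idea the paper lays out informally in the paragraphs preceding the theorem: the paper does not give a separate proof, but explains that $M_{2^N}$ factors as $N$ successive applications of the two-copy exchange $E$ on progressively merged blocks, so that $2^k$-undistillability of $\rho$ reduces step by step to $1$-undistillability of $\rho(e^k)$. Your induction, the identity $\sigma(e^{k-1})=\rho(e^k)$ for $\sigma=\rho(e^1)$, and the key link ``$\sigma$ is $m$-undistillable $\iff$ $\rho$ is $2m$-undistillable'' via the factorization $M_{2m}=(\text{reorder})\circ M'_m\circ E^{\otimes m}$ are exactly a clean formalization of that sketch; the two points you flag (that $E$ respects the $A/B$ split so that $(E(\rho\otimes\rho)E^\dagger)^{T_A}=E(\rho^{T_A}\otimes\rho^{T_A})E^\dagger$, and that the inductive hypothesis must be stated for arbitrary dimension) are the only places requiring care, and both go through as you describe.
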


In other words, we are essentially searching for an infinite sequence of points connected by the operation
$E$, with the first point being $\rho = \rho(e^0)$, 
second point being $\rho(e^1) = E(\rho(e^0)\otimes\rho(e^0))E^{\dagger}$, and the $(i+2)$-th 
point being $\rho(e^{i+1}) = E(\rho(e^i)\otimes\rho(e^{i}))E^{\dagger}$.
If points of this infinite sequence always fall within the set of 1-undistillability, then we can
safely say that the starting point $\rho$ is N-undistillable for arbitrary N, or that it is a bound state.

\section{Bounds for Werner state N-undistillability}

\subsection{Werner state 1-undistillability and NPT condition}
It has been shown in \cite{9708015} that it is sufficient to only consider Werner states for NPT bound problem, that
if an NPT bound state exists, there must be an NPT bound state within the family of Werner states.
Necessary and sufficient conditions of Werner state 1-undistillability and partial transpose negativity 
are obtained in this section.

We first introduce a lemma that would assist further attempts concerning inner product with Schmidt rank 
$\leq 2$ pure states:
\begin{lemma}\label{fact}
Let $|\psi\rangle$ be an arbitrary pure quantum state and
    $s_j$ are the Schmidt coefficients of $|\psi\rangle$ arranged in descending order, then 
    \begin{equation}\label{facteq}
       \;\sum_{j=1}^k s_j^2= max_{|\phi^{SR\leq k}\rangle} |\langle \psi|\phi^{SR\leq k}\rangle |^2,
    \end{equation}
    where $|\phi^{SR\leq k}\rangle$ is an arbitrary quantum pure state with Schimidt rank being less than k, namely, $SR(|\phi^{SR\leq k}\rangle)\leq k$.
\end{lemma}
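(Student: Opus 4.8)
The plan is to recognise Eq.~\ref{facteq} as the fidelity form of the Eckart--Young--Mirsky / Ky Fan extremal principle and to establish the two inequalities separately. I would fix a Schmidt decomposition $|\psi\rangle=\sum_j s_j|u_jv_j\rangle$ with $s_1\ge s_2\ge\cdots\ge 0$, and let $\sigma=\mathrm{Tr}_B|\psi\rangle\langle\psi|=\sum_j s_j^2|u_j\rangle\langle u_j|$ denote the reduced state on $\mathcal{H}_A$, whose eigenvalues are precisely the $s_j^2$.

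For the upper-bound direction, I would take any normalised $|\phi\rangle$ with $SR(|\phi\rangle)\le k$; its reduced state $\mathrm{Tr}_B|\phi\rangle\langle\phi|$ has rank at most $k$, so if $Q$ denotes the orthogonal projector onto its support then $\mathrm{rank}(Q)\le k$ and $(Q\otimes I)|\phi\rangle=|\phi\rangle$. Since $Q\otimes I$ is a projection, Cauchy--Schwarz applied to the positive semidefinite form $\langle\cdot|(Q\otimes I)|\cdot\rangle$ gives
\begin{equation}
|\langle\psi|\phi\rangle|^2=\big|\langle\psi|(Q\otimes I)|\phi\rangle\big|^2\le \langle\psi|(Q\otimes I)|\psi\rangle=\mathrm{Tr}(Q\sigma),
\end{equation}
using $\langle\phi|\phi\rangle=1$. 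I would then close the bound with the Ky Fan maximum principle: over projectors of rank at most $k$, $\mathrm{Tr}(Q\sigma)$ is maximised by the projector onto the span of the top $k$ eigenvectors of $\sigma$, with value $\sum_{j=1}^k s_j^2$. Hence $\max_{|\phi^{SR\le k}\rangle}|\langle\psi|\phi^{SR\le k}\rangle|^2\le\sum_{j=1}^k s_j^2$.

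For the reverse inequality I would simply exhibit the maximiser obtained by truncating and renormalising the Schmidt decomposition, namely $|\phi\rangle=\big(\sum_{j=1}^k s_j^2\big)^{-1/2}\sum_{j=1}^k s_j|u_jv_j\rangle$, which is normalised, has Schmidt rank $\le k$, and by a one-line computation satisfies $|\langle\psi|\phi\rangle|^2=\sum_{j=1}^k s_j^2$; combined with the upper bound this proves Eq.~\ref{facteq}. (If fewer than $k$ Schmidt coefficients are nonzero the truncated sum is simply shorter and the identity is unaffected.)

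An alternative route, closer to the matrix-analytic language of later sections, would go through the state-operator isomorphism of Definition \ref{iso}: one has $\langle\psi|\phi\rangle=\mathrm{Tr}\big(\Psi(|\psi\rangle)^\dagger\Psi(|\phi\rangle)\big)$, with $\Psi(|\psi\rangle)$ carrying singular values $s_j$ and $\Psi(|\phi^{SR\le k}\rangle)$ of rank $\le k$ and unit Frobenius norm, so von Neumann's trace inequality followed by Cauchy--Schwarz on the first $k$ singular values yields the same bound, while the truncated singular value decomposition of $\Psi(|\psi\rangle)$ again supplies the extremiser. I do not expect a genuine obstacle here, since the statement is a standard extremal fact; the only mildly delicate points are extracting $(Q\otimes I)|\phi\rangle=|\phi\rangle$ from the rank constraint on the reduced state, and the degenerate case in which some of the $k$ largest Schmidt coefficients vanish.
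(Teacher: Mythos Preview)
Your proof is correct. Your primary route---projecting onto the support of the reduced state of $|\phi\rangle$, applying Cauchy--Schwarz in the semi-inner product $\langle\cdot|(Q\otimes I)|\cdot\rangle$, and then invoking the Ky Fan maximum principle for $\mathrm{Tr}(Q\sigma)$---is genuinely different from the paper's argument, which passes through the state-operator isomorphism and bounds $|\mathrm{Tr}(A^\dagger B)|$ via the Weyl/Horn--Johnson singular-value inequalities before finishing with Cauchy--Schwarz on the first $k$ singular values. Your alternative route is essentially the paper's proof. The projector/Ky Fan approach has the advantage of staying entirely in the Hilbert-space picture and being self-contained (no external matrix-inequality references), while the paper's approach dovetails with the operator language used in later sections. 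Your argument is in one respect more complete: the paper only establishes the upper bound $|\langle\psi|\phi^{SR\le k}\rangle|^2\le\sum_{j=1}^k s_j^2$ and does not exhibit a maximiser, whereas you explicitly construct the truncated Schmidt state achieving equality, which is needed for the ``$=$'' in Eq.~\ref{facteq}.
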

Proof of this lemma is provided in Appendix \ref{fact_proof}.

A Werner state is as follows: 
\begin{equation}
    \rho_w = \frac{1}{d^2 + \beta d}(I + \beta F), 
\end{equation}
where $F = \sum_{ij}|ij\rangle\langle ji|$, $I$ is the unnormalized identity matrix, d denote the dimensionality of both parts of the system, and $\beta$ is a parameter characterizing 
the ``portion'' of swap operator $F$, ranging in $-1 \leq \beta\leq 1$.

Its partial transpose is:
\begin{equation}
    \rho_w^{T_A} = \frac{1}{d^2 + \beta d}(I + \beta G),
\end{equation}
where $G = \sum_{ij}|ii\rangle\langle jj|$, which is in fact the unnormalized density matrix of a pure maximally entangle state $|\Phi\rangle$, namely,
\begin{equation}
    G = d|\Phi\rangle\langle \Phi|, \; |\Phi\rangle = \sum_{i}\frac{1}{\sqrt{d}}
    |ii\rangle.
\end{equation}

When $\beta >0$, $\rho_w^{T_A}$ is always positive semidefinite, hence we consider $\beta<0$ case only. Combining Eq.\ref{maineq} and Eq.\ref{facteq}, we obtain
\begin{equation}
    \begin{aligned}
    &\langle \psi^{SR\leq 2}|\rho_w^{T_A}|\psi^{SR\leq 2}\rangle \\
    &= \frac{1}{d^2+\beta d}(1+\beta d|\langle \Phi|\psi^{SR\leq 2}
    \rangle|^2)\stackrel{L.\ref{fact}}{\geq} \frac{1+2\beta}{d^2+\beta d}.
\end{aligned}
\end{equation}
A Werner state $\rho_w$ is 1-undistillable iff 
$\langle \psi^{SR\leq 2}|\rho_w^{T_A}|\psi^{SR\leq 2}\rangle \geq 0$ for any
$|\psi^{SR\leq 2}\rangle$ pure bipartite quantum state of Schmidt rank no larger than 2.
Therefore, Werner state is 1-undistillable iff $\beta\geq -\frac{1}{2}$.

Furthermore, for arbitrary quantum state $|\psi\rangle$, we have
\begin{equation}
    \langle \psi|\rho_w^{T_A}|\psi\rangle =\frac{1}{d^2+\beta d}(1+\beta d|\langle \Phi|\psi
    \rangle|^2)\geq \frac{1+\beta d}{d^2+\beta d},
\end{equation}
with the equality achieved when $|\psi\rangle = |\Phi\rangle$.
Therefore, Werner state is NPT iff $\beta < -\frac{1}{d}$. Any NPT 1-undistillable Werner state can therefore
only be found within the parameter range of $-\frac{1}{2}\leq \beta < -\frac{1}{d}$, provided
$d > 2$.

\subsection{Sufficient conditions for Werner state N-undistillability}

We derive new bounds for Werner state N-undistillability in this section. 
As a result, for any finite N, a set of N-undistillable states can be found, 
but similar to the 
circumstances encountered in \cite{9910026}, \cite{9910022} and \cite{Chen_2016}, 
the set of N-undistillable states shrinks to emptiness as N approaches infinity.

In the previous section it has been established that
Werner state N-undistillability is equivalent to:
\begin{equation}
    \langle \psi^{SR\leq 2}|M_N(\rho_w^{T_A})^{\otimes N}M_N^{\dagger}|\psi^{SR\leq 2}\rangle\geq 0,
\end{equation}
where $|\psi^{SR\leq 2}\rangle$ is an arbitrary quantum pure state with Schmidt rank being no larger than 2.

Take the 2-copy case as an example, recalling operation $E$ defined in Eq.\ref{defe}:
\begin{equation}
    \begin{aligned}
    &E(\rho_w^{T_A}\otimes \rho_w^{T_A})E^{\dagger}= (\frac{1}{d^2 + \beta d})^2 (E(I\otimes I)E^{\dagger} \\
    &+ \beta [E(I\otimes G)E^{\dagger} + E(G\otimes I)E^{\dagger}] + 
    \beta^2 E(G\otimes G)E^{\dagger}).
    \end{aligned}
\end{equation}
The normalization factor of $(\frac{1}{d^2 + \beta d})^2$ can be ignored since it doesn't affect positivity.
The first and the last term remains unchanged in the sense that:
\begin{equation}
    \begin{aligned}
    &E(I\otimes I)E^{\dagger} = E(\sum_{ijkl}|ijkl\rangle
    \langle ijkl|)E^{\dagger} = \sum_{ijkl}|ikjl\rangle\langle ikjl|\\
    &= I(e),
    \end{aligned}
\end{equation}
with $I(e)$ still being the identity matrix, only that it exists on a higher dimensionality of $d^4\times d^4$. Similarly,
\begin{equation}
    \begin{aligned}
    &E(G\otimes G)E^{\dagger} = E(\sum_{ijkl}|iikk\rangle
    \langle jjll|)E^{\dagger} = \sum_{ijkl}|ikik\rangle\langle jljl|\\
    &= G(e),
    \end{aligned}
\end{equation}
with $G(e) = |\Phi(e)\rangle\langle \Phi(e)|$ still being the pure state density matrix of unnormalized 
$|\Phi(e)\rangle = \sum_{ij}|ijij\rangle$, only that it exists on a higher
dimensionality of $d^4\times d^4$.

Regarding the positivity of the middle part $E(I\otimes G)E^{\dagger} + E(G\otimes I)E^{\dagger}$ on $|\psi^{SR\leq 2}\rangle$,
\begin{equation}
    \begin{aligned}
    &E(I\otimes G)E^{\dagger} = E(\sum_{ijkl}|ijkk\rangle
    \langle ijll|)E^{\dagger} = \sum_{ijkl}|ikjk\rangle\langle iljl|\\
    & = d(|\psi_{00}^{s_2=s_4}\rangle\langle \psi_{00}^{s_2=s_4}| + \dots + 
    |\psi_{d-1, d-1}^{s_2=s_4}\rangle\langle \psi_{d-1, d-1}^{s_2=s_4}|)\\
    & = d\sum_{ij}|\psi_{ij}^{s_2=s_4} \rangle\langle \psi_{ij}^{s_2=s_4} |,
    \end{aligned}
\end{equation}
where$|\psi^{s_2=s_4}_{ij}\rangle = \frac{1}{\sqrt{d}}\sum_{k=0}^{d-1}|ikjk\rangle$ is the superposition
of all multipartite states with identical states on the second and fourth subsystems, 
while their first and third subsystems are of states i and j
respectively.
$|\psi^{s_2=s_4}_{ij}\rangle$ is a normalized pure state of Schmidt rank d, 
and all Schmidt coefficients are $\frac{1}{\sqrt{d}}$.
Do a decomposition of $|\psi^{SR\leq 2}\rangle$ in the following way:
\begin{equation}
    |\psi^{SR\leq 2}\rangle = \sum_{ij}p_{ij}|\psi^{SR\leq 2}_{ij}\rangle,
\end{equation}
where $|\psi^{SR\leq 2}_{ij}\rangle$ represents the normalized extraction 
of all $|i-j-\rangle$
terms from the state of 
$|\psi^{SR\leq 2}\rangle$. 
More specifically, for state $|\psi^{SR\leq 2}\rangle$ of:
\begin{equation}
    |\psi^{SR\leq 2}\rangle = \sum_{abcd} p_{abcd} |abcd\rangle,
\end{equation}
the corresponding $|\psi^{SR\leq 2}_{ij}\rangle$ is:
\begin{equation}
    |\psi^{SR\leq 2}_{ij}\rangle = \frac{1}{N_p} \sum_{ab} p_{iajb}|iajb\rangle,
\end{equation}
where $N_p=\sqrt{\sum_{ab}|p_{iajb}|^2}$.

It can be proven that the normalized extraction of all $|ixjy\rangle$ terms must also 
have Schmidt rank $\leq 2$. Consider state-operator isomorphism in Def.\ref{iso},
the Schmidt rank of $|\psi\rangle$ is equivalent to the rank
of the matrix $\Psi(|\psi\rangle)$.
The normalized extraction of all $|i-j-\rangle$ terms composes a new state of $|\psi^{SR\leq 2}_{ij}\rangle$,
therefore, its corresponding matrix is a submatrix of $\Psi(|\psi^{SR\leq 2}\rangle)$,
composed of rows $i\times d+x$ and columns $j\times d+y$, with x,y taking values from $\{0, \dots, d-1\}$.
Since the rank of a submatrix is never larger than the rank of the whole matrix,
$|\psi^{SR\leq 2}_{ij}\rangle$ must have Schmidt rank less than or equal to that of $|\psi^{SR\leq 2}\rangle$.

Using Lem.\ref{fact} again,
it is then clear that:
\begin{equation}
    |\langle\psi_{ij}^{SR\leq 2}|\psi_{pq}^{s_2=s_4}\rangle|^2 
    = |\langle\psi_{ij}^{SR\leq 2}|\psi_{ij}^{s_2=s_4}\rangle|^2\delta_{ip}\delta_{jq}
    \stackrel{L.\ref{fact}}{\leq} \frac{2}{d}\delta_{ip}\delta_{jq}.
\end{equation}
It follows that:
\begin{equation}
    \begin{aligned}
    &\langle\psi^{SR\leq 2}|E(I\otimes G)E^{\dagger}|\psi^{SR\leq 2}\rangle = d\sum_{ij}|p_{ij}|^2 
    |\langle\psi_{ij}^{SR\leq 2}|\psi_{ij}^{s_2=s_4}\rangle|^2 \\
    &\leq 2\sum_{ij}|p_{ij}|^2 = 2.
    \end{aligned}
\end{equation}

A similar analysis is applied to $E(G\otimes I)E^{\dagger}$, getting
\begin{equation}
    \begin{aligned}
    &\langle\psi^{SR\leq 2}|E(G\otimes I)E^{\dagger}|\psi^{SR\leq 2}\rangle = d\sum_{ij}|q_{ij}|^2 
    |\langle\phi_{ij}^{SR\leq 2}|\psi_{ij}^{s_1=s_3}\rangle|^2 \\
    &\leq 2\sum_{ij}|q_{ij}|^2 = 2.
    \end{aligned}
\end{equation}

A sufficient condition for Werner state 2-undistillability is:
\begin{equation}
    \begin{aligned}
    &\langle\psi^{SR\leq 2}|I(e) + \beta^2 G(e) + \beta E(I\otimes G + G\otimes I)E^{\dagger}|\psi^{SR\leq 2}\rangle\\
    &\geq 1 + \beta^2 |\langle\Psi(e)|\psi^{SR\leq 2}\rangle|^2 +4\beta\geq 1+4\beta\geq 0.
    \end{aligned}
\end{equation}
Therefore we can conclude that when $\beta\geq -\frac{1}{4}$, $\rho_w$ must be 2-undistillable.

A similar process is followed in the arbitrary N case, yielding a lower bound regarding all the odd terms in 
$(1+\beta)^{N}$. Since the odd terms in $(1+\beta)^{N}$ can be expressed by $\frac{(1+\beta)^{N} - (1-\beta)^{N}}{2}$,
a lower bound can then be obtained for the N-undistillability case:
\begin{theorem}\label{beta}
    Let $\beta_0$ be the zero point of $1 + (1+\beta)^{N} - (1-\beta)^{N}$ within $[-1, 0]$.
    When $\beta\geq \beta_0$, Werner state $\rho_w$ is N-undistillable.
\end{theorem}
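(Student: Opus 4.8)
The plan is to carry out, directly at the level of $M_N$, the same computation illustrated above for $N=2$: expand $(\rho_w^{T_A})^{\otimes N}$ into its $2^N$ tensor terms and estimate each term according to the parity of its number of $G$-factors. Writing $\rho_w^{T_A}=\frac{1}{d^2+\beta d}(I+\beta G)$ and dropping the positive scalar $(d^2+\beta d)^{-N}$ (which does not affect positivity), one gets
\[
    M_N(\rho_w^{T_A})^{\otimes N}M_N^{\dagger}\;\propto\;\sum_{S\subseteq\{1,\dots,N\}}\beta^{|S|}\,M_N G_S M_N^{\dagger},
\]
where $G_S$ carries a factor $G$ in each tensor slot indexed by $S$ and a factor $I$ in the remaining slots. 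I would first record the two extreme cases: $S=\varnothing$ yields the identity on $\mathcal H_A^{\otimes N}\otimes\mathcal H_B^{\otimes N}$ and contributes exactly $1$ to $\langle\psi^{SR\leq 2}|\cdot|\psi^{SR\leq 2}\rangle$; and, since $G$ and $I$ are positive semidefinite, each $M_N G_S M_N^{\dagger}$ is positive semidefinite, so every term with $|S|$ even (hence $\beta^{|S|}\geq 0$) contributes a nonnegative amount and may simply be dropped.

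The core step is the uniform bound: for every nonempty $S$,
\[
    \langle\psi^{SR\leq 2}|\,M_N G_S M_N^{\dagger}\,|\psi^{SR\leq 2}\rangle\;\leq\;2 .
\]
To establish it I would unfold $M_N G_S M_N^{\dagger}$ exactly as in the $N=2$ case: using $G=d|\Phi\rangle\langle\Phi|$ one finds $M_N G_S M_N^{\dagger}=d^{|S|}\sum_{\vec b,\vec c}|\chi_{\vec b,\vec c}\rangle\langle\chi_{\vec b,\vec c}|$, where $\vec b,\vec c$ run over the A- and B-labels of the $I$-slots (those $i\notin S$) and $|\chi_{\vec b,\vec c}\rangle$ is the normalized state that is maximally entangled across $A$ vs $B$ on the $S$-slots while being frozen to $\vec b,\vec c$ on the remaining slots. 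These states are mutually orthogonal, and each has Schmidt rank $d^{|S|}$ with a flat Schmidt spectrum. Decomposing $|\psi^{SR\leq 2}\rangle=\sum_{\vec b,\vec c}N_{\vec b,\vec c}\,|\psi^{SR\leq 2}_{\vec b,\vec c}\rangle$ into its normalized extractions on those label sectors --- each of Schmidt rank $\leq 2$ by the submatrix argument under the isomorphism of Def.~\ref{iso} --- and applying Lemma~\ref{fact} with $k=2$ to $|\chi_{\vec b,\vec c}\rangle$ gives $|\langle\psi^{SR\leq 2}_{\vec b,\vec c}|\chi_{\vec b,\vec c}\rangle|^2\leq 2/d^{|S|}$; summing over $\vec b,\vec c$ with $\sum N_{\vec b,\vec c}^{2}=1$ produces the bound $2$.

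It remains to assemble the estimate. For odd $|S|=k$ we have $\beta^{k}<0$ and the expectation lies in $[0,2]$, so the term is $\geq 2\beta^{k}$; since there are $\binom{N}{k}$ such sets,
\[
    \langle\psi^{SR\leq 2}|M_N(\rho_w^{T_A})^{\otimes N}M_N^{\dagger}|\psi^{SR\leq 2}\rangle\;\geq\;1+2\sum_{k\ \text{odd}}\binom{N}{k}\beta^{k}\;=\;1+(1+\beta)^{N}-(1-\beta)^{N},
\]
using $\sum_{k\ \text{odd}}\binom{N}{k}\beta^{k}=\tfrac12\big[(1+\beta)^{N}-(1-\beta)^{N}\big]$. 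Finally I would analyse $f(\beta):=1+(1+\beta)^{N}-(1-\beta)^{N}$ on $[-1,0]$: since $f(0)=1>0$, $f(-1)=1-2^{N}<0$, and $f'(\beta)=N\big[(1+\beta)^{N-1}+(1-\beta)^{N-1}\big]\geq 0$, $f$ is nondecreasing with a unique zero $\beta_0\in(-1,0)$, and $f(\beta)\geq 0\iff\beta\geq\beta_0$. Hence $\beta\geq\beta_0$ forces the displayed inner product to be nonnegative for every Schmidt-rank-$\leq 2$ pure state, which by Eq.~\ref{maineq} is exactly $N$-undistillability of $\rho_w$. (As a check, $N=1$ gives $f(\beta)=1+2\beta$, $\beta_0=-\tfrac12$, matching the exact 1-undistillability threshold; $N=2$ gives $\beta_0=-\tfrac14$, matching the computation above.)

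The step I expect to be the main obstacle is the uniform bound ``$\leq 2$'' for all nonempty $S$: one must write the reshuffled operator $M_N G_S M_N^{\dagger}$ correctly, confirm the mutual orthogonality and the flat Schmidt structure of the $|\chi_{\vec b,\vec c}\rangle$, and --- most delicately --- verify that passing to the label-sector extractions never raises the Schmidt rank beyond $2$, which is precisely where the state-operator isomorphism and the ``rank of a submatrix is at most the rank of the matrix'' observation are needed. The rest is binomial bookkeeping and an elementary monotonicity argument for $f$.
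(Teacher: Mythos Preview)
Your proposal is correct and follows essentially the same route as the paper's proof: expand $(\rho_w^{T_A})^{\otimes N}$ over subsets $S$, drop the nonnegative even-$|S|$ terms, use the orthogonal decomposition of $M_N G_S M_N^{\dagger}$ into flat-spectrum maximally entangled states together with the submatrix/Schmidt-rank argument and Lemma~\ref{fact} to get the uniform bound $\leq 2$, and then sum the odd binomial terms. Your added monotonicity analysis of $f(\beta)=1+(1+\beta)^N-(1-\beta)^N$ and the $N=1,2$ sanity checks are welcome refinements that the paper leaves implicit.
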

Proof of this is presented in Appendix \ref{beta_proof}.
By calculating the zero point a lower bound is obtained where the corresponding
Werner state is N-undistillable. Note that unlike in previous works, this bound $\beta_0$ is also independent
of dimensionality $d$.
By subsequently raising the dimensionality of Werner states within 
the set, thus raising $-\frac{1}{d}$, we can always obtain a set of Werner states that 
falls into $[\beta_0, -\frac{1}{d})$, which means that they are both 
NPT and N-undistillable for any finite
N. However, as N increases, higher dimensionality are required for undistillability, and so the subset of 
N-undistillable NPT states shrinks to zero as N approaches to infinity, which is
similar to the circumstances encountered in \cite{9910026}, \cite{9910022} and \cite{Chen_2016}. 
Therefore, this method fails to find a state that is undistillable 
for arbitrarily many copies, while being NPT at the same time.

\section{The problem as a partial trace inequality}

The problem of Werner state N-undistillability can be equivalently written as inequalities regarding
the Frobenius norm of partial traces of a rank-2 matrix.

Take 2-undistillability as an example.
$\langle \psi^{SR\leq 2}|E(I\otimes G + G\otimes I)E^{\dagger}|\psi^{SR\leq 2}\rangle$ can be written with regard to 
the Frobenius norms 
of two partial traces:
\begin{equation}
    \langle \psi^{SR\leq 2}|E(I\otimes G)E^{\dagger}|\psi^{SR\leq 2}\rangle = ||Tr_2 (X_2(\psi))||_F^2,
\end{equation}
\begin{equation}
    \langle \psi^{SR\leq 2}|E(G\otimes I)E^{\dagger}|\psi^{SR\leq 2}\rangle = ||Tr_1 (X_2(\psi))||_F^2,
\end{equation}
where $||\cdot||_F$ denotes Frobenius norm:
\begin{equation}
    ||X||_F = \sqrt{\sum_{ij}|X_{ij}|^2} = \sqrt{Tr(X^{\dagger}X)}.
\end{equation}
$X_2(\psi)$ is a $d^2 \times d^2 $ matrix of rank $\leq$ 2, obtained by state-operator
isomorphism, from $|\psi^{SR\leq 2}\rangle$, a pure bipartite $d^2\times d^2$ state with Schmidt rank $\leq$ 2.
For simplicity, we will be writing it as $X_2$ from now on.

The normalization of $|\psi^{SR\leq 2}\rangle$ requires that $||X_2||_F^2 = Tr(X_2^{\dagger}X_2) = 1$.
Now the problem of Werner state 2-undistillability is equivalent to:
Find a range of $\beta$ that makes the
following inequality always holds for arbitrary $X_2$ with rank no larger than 2 (The requirement of 
$||X_2||_F^2=1$ can be lifted due to homogeneity):
\begin{equation}
    \begin{aligned}
    &||X_2||_F^2+\beta^2 |Tr(X_2)|^2 \\
    &+ \beta (||Tr_1(X_2)||_F^2 + ||Tr_2(X_2)||_F^2)\geq 0.
    \end{aligned}
\end{equation}

N-undistillability has a similar form:
\begin{theorem}\label{ineq1}
    Werner state $\rho_{w}$ is N-undistillable iff the following holds for all $d^N\times d^N$ $X_2$:
    \begin{equation}\label{needed}
        \begin{aligned}
            \sum_{S\subset\{1,\cdots,N\}}\beta^{|S|}||Tr_S(X_2)||_F^2\geq 0,
        \end{aligned}
    \end{equation}
    where $Tr_S$ takes partial traces of the subsystems in set $S$.
\end{theorem}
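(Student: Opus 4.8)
The plan is to start from the representation of $N$-undistillability already established earlier, namely that $\rho_w$ is $N$-undistillable iff $\langle\psi^{SR\le2}|M_N(\rho_w^{T_A})^{\otimes N}M_N^{\dagger}|\psi^{SR\le2}\rangle\ge 0$ for every Schmidt-rank-$\le 2$ pure state $|\psi^{SR\le2}\rangle$ on $\mathcal{H}_A^{\otimes N}\otimes\mathcal{H}_B^{\otimes N}$, and to expand the operator explicitly. Using $\rho_w^{T_A}=\frac{1}{d^2+\beta d}(I+\beta G)$ with $G=\sum_{ij}|ii\rangle\langle jj|$ and discarding the positive prefactor, the expansion of the tensor power gives $(I+\beta G)^{\otimes N}=\sum_{S\subseteq\{1,\dots,N\}}\beta^{|S|}\,\Gamma_S$, where $\Gamma_S$ is the tensor product over the $N$ copies carrying a factor $G$ on each copy indexed by $S$ and a factor $I$ on the remaining copies. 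Conjugation by $M_N$ is linear, so it commutes with this sum, and it remains only to evaluate $\langle\psi^{SR\le2}|M_N\Gamma_S M_N^{\dagger}|\psi^{SR\le2}\rangle$ for each fixed $S$.

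The core claim is that, writing $X_2=\Psi(|\psi^{SR\le2}\rangle)$ for the rank-$\le 2$, $d^N\times d^N$ matrix attached to the state by the isomorphism of Definition~\ref{iso}, one has $\langle\psi^{SR\le2}|M_N\Gamma_S M_N^{\dagger}|\psi^{SR\le2}\rangle=\|\mathrm{Tr}_S(X_2)\|_F^2$. I would prove this by a direct index computation that mirrors and generalises the two-copy evaluations of $E(I\otimes G)E^{\dagger}$ and $E(G\otimes I)E^{\dagger}$ carried out above. The key observations are: (i) $M_N$ sends copy $k$'s $A$-slot to block position $k$ of $\mathcal{H}_A^{\otimes N}$ and copy $k$'s $B$-slot to block position $k$ of $\mathcal{H}_B^{\otimes N}$, so $M_N\Gamma_S M_N^{\dagger}$ acts as the identity on every subsystem outside $S$ and, on the matched pair $(A_k,B_k)$ with $k\in S$, as $\sum_{i_k j_k}|i_k\rangle_{A_k}|i_k\rangle_{B_k}\langle j_k|_{A_k}\langle j_k|_{B_k}$; (ii) sandwiching this operator between $|\psi^{SR\le2}\rangle=\sum_{\vec a,\vec b}p_{\vec a\vec b}\,|\vec a\rangle_A|\vec b\rangle_B$ forces, for each $k\in S$, equality of the $A_k$ and $B_k$ indices in both the bra and the ket together with a sum over their common value, which is exactly the action of $\mathrm{Tr}_S$ on $X_2$; the square modulus and the remaining free sums over the complementary indices then assemble into $\|\mathrm{Tr}_S(X_2)\|_F^2$. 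The special cases $S=\emptyset$ and $S=\{1,\dots,N\}$ recover $\|X_2\|_F^2$ and $|\mathrm{Tr}(X_2)|^2$, consistent with the $N=2$ formula.

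Summing over $S$ then yields $\langle\psi^{SR\le2}|M_N(\rho_w^{T_A})^{\otimes N}M_N^{\dagger}|\psi^{SR\le2}\rangle=c\sum_{S\subseteq\{1,\dots,N\}}\beta^{|S|}\|\mathrm{Tr}_S(X_2)\|_F^2$ with $c=(d^2+\beta d)^{-N}>0$ in the relevant parameter range, so nonnegativity of the left-hand side for all Schmidt-rank-$\le 2$ states is equivalent to nonnegativity of the right-hand side for all rank-$\le 2$ matrices $X_2$. Since $\Psi$ is a bijection between such pure states and such matrices, and since every term $\|\mathrm{Tr}_S(X_2)\|_F^2$ is homogeneous of degree two in $X_2$, the normalisation $\|X_2\|_F=1$ can be dropped without loss, giving precisely the stated equivalence.

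The step I expect to be the main obstacle is the index bookkeeping in the core claim: tracking how $M_N$ interleaves the $A$- and $B$-registers of the $N$ copies, and checking that the $G$-factors on the subset $S$ genuinely collapse to the partial trace $\mathrm{Tr}_S$ rather than to some permuted or transposed variant, in particular that the patterns of Kronecker deltas produced by the bra and by the ket match the two index groups ($S$ and its complement) appearing in the definition of $\|\mathrm{Tr}_S(X_2)\|_F^2$. Everything else — the expansion of the tensor power, the homogeneity argument, and the translation through the state–operator isomorphism — is routine once the two-copy computation already in the paper is used as a template.
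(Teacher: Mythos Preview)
Your proposal is correct and follows essentially the same route as the paper's proof: both start from the established $M_N$-conjugated expectation, expand $(I+\beta G)^{\otimes N}$ as a sum over subsets $S$ of factors carrying $G$, and then verify by an index computation that the contribution of each such term equals $\|\mathrm{Tr}_S(X_2)\|_F^2$ under the state--operator isomorphism. The paper phrases the subset sum via binary sequences $seq(m,N)$ and sets $S_0,S_1$, while you index directly by $S$, but the content and the key computation are the same.
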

The proof of this is presented in Appendix \ref{ineq1_proof}.
A similar result is obtained by \cite{2310.05726}.

The partial trace inequalities are with regard to a matrix $X_2$ of rank one or two. 
From now on we consider the case where $\beta$ is taken to be $-\frac{1}{2}$ according to popular guess. 
For 2-undistillability problem,
The rank one 
case can be trivially proved by making use of the following lemma:
\begin{lemma}\label{lem1}
The inequalities:
\begin{equation}
    ||Tr_1(X_1)||_F^2 \leq ||X_1||_F^2, ||Tr_2(X_1)||_F^2 \leq ||X_1||_F^2,
\end{equation}
hold for any $d^2\times d^2$ square matrix $X_1$ of rank one.
\end{lemma}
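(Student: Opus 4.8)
The plan is to use the rank-one structure head-on. Write the $d^2\times d^2$ matrix $X_1$, viewed through the state-operator isomorphism of Def.~\ref{iso}, as an outer product $X_1=|u\rangle\langle v|$ with $|u\rangle\in\mathcal{H}_{A_1}\otimes\mathcal{H}_{A_2}$ and $|v\rangle\in\mathcal{H}_{B_1}\otimes\mathcal{H}_{B_2}$, so that $\|X_1\|_F^2=\langle u|u\rangle\,\langle v|v\rangle$. One should emphasize that rank one is essential: the partial trace is a contraction for the trace norm but not for the Frobenius norm — e.g. $Tr_2$ of the $d^2\times d^2$ identity equals $d$ times the $d\times d$ identity, with Frobenius norm $d^{3/2}>d$ — so no norm-monotonicity shortcut is available.

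First I would reshape the vectors into matrices. Pairing $A_2$ with $B_2$, a direct check from the definitions gives $Tr_2(X_1)=A\,B^{\dagger}$ (up to an irrelevant entrywise conjugation), where $A$ and $B$ are the $d\times d$ matrices whose row index runs over $\mathcal{H}_{A_1}$, resp.\ $\mathcal{H}_{B_1}$, and whose column index is the traced-out label in $\mathcal{H}_{A_2}$, resp.\ $\mathcal{H}_{B_2}$. Since reshaping preserves Euclidean length, $\|A\|_F^2=\langle u|u\rangle$ and $\|B\|_F^2=\langle v|v\rangle$. Hence $\|Tr_2(X_1)\|_F^2=\|AB^{\dagger}\|_F^2=Tr\big((A^{\dagger}A)(B^{\dagger}B)\big)=Tr(PQ)$ with $P:=A^{\dagger}A\succeq 0$, $Q:=B^{\dagger}B\succeq 0$, $Tr(P)=\langle u|u\rangle$, $Tr(Q)=\langle v|v\rangle$.

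The whole lemma therefore reduces to the elementary inequality $Tr(PQ)\le Tr(P)\,Tr(Q)$ for positive semidefinite $P,Q$: diagonalizing $P=\sum_i p_i|e_i\rangle\langle e_i|$ and $Q=\sum_j q_j|f_j\rangle\langle f_j|$ gives $Tr(PQ)=\sum_{i,j}p_i q_j|\langle e_i|f_j\rangle|^2\le\sum_{i,j}p_i q_j$ (equivalently $Tr(PQ)\le\|P\|_{\mathrm{op}}Tr(Q)\le Tr(P)Tr(Q)$). The bound for $Tr_1$ is identical after swapping the roles of subsystems $1$ and $2$. There is essentially no obstacle here: the only point needing care is pinning down the index conventions so that $Tr_2(X_1)$ is correctly identified with the product of the two reshaped matrices — once that identification is in place, the inequality is one line.
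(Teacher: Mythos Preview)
Your proof is correct and follows essentially the same route as the paper: both write $X_1$ as an outer product, identify $Tr_2(X_1)$ with the matrix product $AB^{\dagger}$ of the two reshaped vectors, and then bound $\|AB^{\dagger}\|_F^2\le\|A\|_F^2\|B\|_F^2$. The only cosmetic difference is the last step---the paper applies Cauchy--Schwarz entrywise to each $|\sum_j w_{ij}x_{kj}|^2$, whereas you phrase the same submultiplicativity as $Tr(PQ)\le Tr(P)\,Tr(Q)$ for $P,Q\succeq 0$.
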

Proof of this lemma is presented in Appendix \ref{lem1_proof}.
For $X_2$ of rank two, the inequality of concern when $\beta = -\frac{1}{2}$
is:
\begin{equation}\label{reformed}
    ||Tr_2(X_2)||_F^2 + ||Tr_1(X_2)||_F^2 - \frac{1}{2}|Tr(X_2)|^2 \leq 2||X_2||_F^2 = 2.
\end{equation}
For any square matrix $X_2$ of rank two, the singular decomposition can be applied, decomposing the matrix as 
the sum of two rank one matrices:
\begin{equation}
    X_2 = \sigma_1 X_2^1 + \sigma_2 X_2^2 = \sigma_1 \mathbf{u}^1\mathbf{v}^{1\dagger} + \sigma_2 \mathbf{u}^2\mathbf{v}^{2\dagger},
\end{equation}
where $\sigma_1^2 + \sigma_2^2 = 1$, and both $\sigma_1$ and $\sigma_2$ are positive.
We introduce $d\times d$ matrices $U_1, U_2, V_1, V_2$ that are the result of state-operator isomorphisms,
corresponding to $\mathbf{u}^1, \mathbf{u}^2, \mathbf{v}^1, \mathbf{v}^2$ respectively.

Eq.\ref{reformed} then becomes:
\begin{equation}
    \begin{aligned}
    &\sigma_1^2 (Tr(V_1 U_1^{\dagger} U_1 V_1^{\dagger}) + Tr(U_1^{\dagger} V_1 V_1^{\dagger} U_1))\\
    &+\sigma_2^2 (Tr(V_2 U_2^{\dagger} U_2 V_2^{\dagger}) + Tr(U_2^{\dagger} V_2 V_2^{\dagger} U_2))\\
    &+\sigma_1 \sigma_2 2 Re[Tr(V_1 U_1^{\dagger} U_2 V_2^{\dagger}) + Tr(U_1^{\dagger} V_1 V_2^{\dagger} U_2)]\\
    &-\frac{|\sigma_1 Tr(U_1 V_1^{\dagger}) + \sigma_2 Tr(U_2 V_2^{\dagger})|^2}{2} \leq 2.
    \end{aligned}
\end{equation}
For simplicity the above is rewritten as:
\begin{equation}
    \sigma_1^2 P + \sigma_2^2 Q + \sigma_1\sigma_2 R \leq 2,
\end{equation}
where $P, Q, R$ are:
\begin{equation}
    P = Tr(U_1^{\dagger}V_1V_1^{\dagger}U_1) + Tr(V_1U_1^{\dagger}U_1V_1^{\dagger})-\frac{|Tr(U_1V_1^{\dagger}) |^2}{2},
\end{equation}
\begin{equation}
    Q = Tr(U_2^{\dagger}V_2V_2^{\dagger}U_2) + Tr(V_2U_2^{\dagger}U_2V_2^{\dagger})-\frac{|Tr(U_2V_2^{\dagger}) |^2}{2},
\end{equation}
\begin{equation}
    \begin{aligned}
    &R = 2Re[Tr(V_1U_1^{\dagger}U_2V_2^{\dagger}) + Tr(U_1^{\dagger}V_1V_2^{\dagger}U_2)\\ 
    &- \frac{Tr^*(U_1V_1^{\dagger})Tr(U_2V_2^{\dagger})}{2}].
    \end{aligned}
\end{equation}
Maximization regarding variables $\sigma_1,\sigma_2$ reduces the question to proving:
\begin{equation}\label{proved}
    R^2 \leq 4(2-P)(2-Q),
\end{equation}
with normalization and orthogonality conditions requiring:
\begin{equation}
    ||U_1||_F^2 = ||V_1||_F^2 = ||U_2||_F^2 = ||V_2||_F^2=1,
\end{equation}
and
\begin{equation}
    tr(V_2^{\dagger}V_1) = tr(U_2^{\dagger}U_1) = 0.
\end{equation}

The special case of $U_1 = V_1$, in other words, $X_2$ being the sum of a normal matrix and a rank-1 matrix,
has been proved in \cite{2310.05726}.

In fact, by a slight change of representation, we can also get an equivalent expression of
Werner state N-undistillability inequality, regarding only rank one matrices:

\begin{theorem}\label{ineq2}
    Werner state $\rho_{w}$ is N-undistillable iff
    \begin{equation}
        Re[f_N(X_1, X'_1)]^2 \leq f_N(X'_1, X'_1)f_N(X_1, X_1),
      \end{equation}
      where
      \begin{equation}
        \begin{aligned}
          &f_N(X_1, X'_1) = \sum_{S\subset\{1,\cdots,N\}}\beta^{|S|} Tr[Tr^{\dagger}_{S}(X_1) Tr_{S}(X'_1)]\geq 0,
        \end{aligned}
      \end{equation}
      where $X_1=\mathbf{w}^{\dagger}\mathbf{x}, X'_1=\mathbf{y}^{\dagger}\mathbf{z}$
      are $d^N\times d^N$ rank one matrices, and their component vectors $\mathbf{w},\mathbf{x},\mathbf{y},
      \mathbf{z}$ satisfy $\mathbf{w}\perp\mathbf{y},\mathbf{x}\perp\mathbf{z}$.
\end{theorem}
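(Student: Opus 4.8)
The plan is to obtain Theorem~\ref{ineq2} from Theorem~\ref{ineq1} by substituting a singular value decomposition into the partial-trace inequality of Eq.~\ref{needed}. Any $d^N\times d^N$ matrix $X_2$ of rank $\leq 2$ can be written as $X_2=\sigma_1 X_1+\sigma_2 X_1'$ with $\sigma_1,\sigma_2\geq 0$ and $X_1=\mathbf{w}^\dagger\mathbf{x}$, $X_1'=\mathbf{y}^\dagger\mathbf{z}$ the two rank-one SVD terms; the requirement that the left singular vectors be orthonormal and the right singular vectors be orthonormal is precisely the condition $\mathbf{w}\perp\mathbf{y}$, $\mathbf{x}\perp\mathbf{z}$ appearing in the theorem, and conversely any such orthogonal pair is, after rescaling, the SVD of $X_2:=X_1+X_1'$ (the rank-$\leq 1$ degenerate cases being handled separately). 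Since $Tr_S$ is linear and $\sigma_1,\sigma_2$ are real, expanding $||Tr_S(X_2)||_F^2=Tr[(Tr_S(X_2))^\dagger Tr_S(X_2)]$ and summing against $\beta^{|S|}$ yields
\begin{equation}
\begin{aligned}
&\sum_{S\subset\{1,\cdots,N\}}\beta^{|S|}||Tr_S(X_2)||_F^2\\
&\;=\sigma_1^2 f_N(X_1,X_1)+\sigma_2^2 f_N(X_1',X_1')+2\sigma_1\sigma_2\,Re[f_N(X_1,X_1')],
\end{aligned}
\end{equation}
the cross term collapsing to twice a real part because $f_N(X_1',X_1)=\overline{f_N(X_1,X_1')}$ ($\beta$ being real). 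So Eq.~\ref{needed} is exactly the assertion that this real quadratic form in $(\sigma_1,\sigma_2)$ is nonnegative on the positive quadrant, for every admissible rank-one pair.

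Before handling the equivalence I would record that the ``diagonal'' terms $f_N(X_1,X_1)$ and $f_N(X_1',X_1')$ are automatically nonnegative (this is the nonnegativity flagged alongside the definition of $f_N$ in the statement). Up to a positive constant, $f_N(X_1,X_1)=\sum_S\beta^{|S|}||Tr_S(X_1)||_F^2$ equals $\langle\psi^{SR\leq 1}|M_N(\rho_w^{T_A})^{\otimes N}M_N^\dagger|\psi^{SR\leq 1}\rangle$, where $|\psi^{SR\leq 1}\rangle=\Psi^{-1}(X_1)$ is a Schmidt-rank-$1$, i.e.\ product, pure state $|\phi_A\rangle\otimes|\phi_B\rangle$ on $\mathcal{H}_A^{\otimes N}\otimes\mathcal{H}_B^{\otimes N}$. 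Because $M_N(\rho_w^{T_A})^{\otimes N}M_N^\dagger$ is simply the partial transpose of $\rho_w^{\otimes N}$ on the whole $A$ register, this evaluates to $\langle\overline{\phi_A}\,\phi_B|\rho_w^{\otimes N}|\overline{\phi_A}\,\phi_B\rangle\geq 0$, since $\rho_w\geq 0$ for $-1\leq\beta\leq 1$. Hence $f_N(X_1,X_1)\geq 0$ unconditionally, which in particular settles the rank-$\leq 1$ instances of Eq.~\ref{needed} and generalizes the $N=2$ content of Lemma~\ref{lem1}.

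The equivalence is then a discriminant argument. For the forward direction, assume $\rho_w$ is $N$-undistillable. Apply Theorem~\ref{ineq1} to the rank-$\leq 2$ matrix $X_1+aX_1'$ for every real $a$ (no orthogonality is even needed here): by the expansion above, $f_N(X_1,X_1)+2a\,Re[f_N(X_1,X_1')]+a^2 f_N(X_1',X_1')\geq 0$ for all $a$, and since the leading coefficient is nonnegative (and if it vanishes the linear term must vanish), the discriminant is nonpositive, i.e.\ $Re[f_N(X_1,X_1')]^2\leq f_N(X_1',X_1')f_N(X_1,X_1)$; replacing $X_1'$ by $e^{i\theta}X_1'$, still admissible, promotes the real part to the modulus if one wishes. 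For the converse, assume the inequality for all admissible pairs and take any rank-$\leq 2$ $X_2$ with SVD $X_2=\sigma_1 X_1+\sigma_2 X_1'$. The pair $(X_1,X_1')$ is admissible, so $Re[f_N(X_1,X_1')]\geq-\sqrt{f_N(X_1,X_1)f_N(X_1',X_1')}$, and combined with the diagonal nonnegativity,
\begin{equation}
\begin{aligned}
&\sigma_1^2 f_N(X_1,X_1)+\sigma_2^2 f_N(X_1',X_1')+2\sigma_1\sigma_2\,Re[f_N(X_1,X_1')]\\
&\quad\geq\Big(\sigma_1\sqrt{f_N(X_1,X_1)}-\sigma_2\sqrt{f_N(X_1',X_1')}\Big)^2\geq 0,
\end{aligned}
\end{equation}
which is Eq.~\ref{needed}; Theorem~\ref{ineq1} then gives $N$-undistillability.

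The step that takes care is the bookkeeping around complex phases and the $M_N$ reshuffling: one must check that the two SVD terms pair off exactly as the theorem's $(\mathbf{w},\mathbf{x})$ and $(\mathbf{y},\mathbf{z})$, that the cross term is conjugate-symmetric so that only its real part survives, and---most importantly---that the diagonal positivity $f_N(X_1,X_1)\geq 0$ is legitimate. This last fact is the only place where anything beyond elementary linear algebra enters, and it relies on the standard observation that the partial transpose of a positive semidefinite operator is nonnegative on product vectors; the rest is the nonnegativity analysis of a $2\times 2$ real quadratic form on the positive quadrant.
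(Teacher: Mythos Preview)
Your proposal is correct and follows essentially the same route as the paper: reduce to Theorem~\ref{ineq1}, insert the SVD $X_2=\sigma_1 X_1+\sigma_2 X_1'$, expand into the quadratic form $\sigma_1^2 f_N(X_1,X_1)+\sigma_2^2 f_N(X_1',X_1')+2\sigma_1\sigma_2\,Re[f_N(X_1,X_1')]$, and run a discriminant argument. The paper's proof does exactly this but more tersely, asserting the equivalence to the discriminant condition in one sentence.

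The one place you go genuinely beyond the paper is your justification of the diagonal positivity $f_N(X_1,X_1)\geq 0$. The paper states this fact (``for rank one matrices $X_1$ and $X'_1$, positivity of $f_N(X_1,X_1)$ and $f_N(X'_1,X'_1)$ can be ascertained'') but only supplies a proof for $N=2$ via the Cauchy--Schwarz computation of Lemma~\ref{lem1}. Your argument---that $f_N(X_1,X_1)$ is, up to a positive constant, the expectation of the partial transpose of $\rho_w^{\otimes N}$ on a product vector, hence equal to an expectation of $\rho_w^{\otimes N}$ itself---works uniformly in $N$ and $\beta\in[-1,1]$ and is cleaner. You are also slightly more careful than the paper in the forward direction: by testing $X_1+aX_1'$ for all real $a$ (rather than only nonnegative singular values) you obtain the full discriminant inequality directly, without needing to argue separately about the sign of the cross term.
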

Proof of this equivalence is presented in Appendix \ref{ineq2_proof}.

It is then established that N-undistillability problem for arbitrary N can be written as inequalities
concerning rank one matrices.
Although the above form looks like a Cauchy-Schwartz inequality, 
$f_N(X_1, X'_1)$ cannot be seen as an inner product and enable the direct application of Cauchy-Schwartz inequality.
$f_N(X_1, X'_1)$ is obviously conjugate symmetric, and 
for rank one matrices $X_1$ and $X'_1$, positivity of $f_N(X_1, X_1)$ and $f_N(X'_1, X'_1)$ 
can be ascertained. However, rank-one matrices
do not compose a vector space, since the sum of two rank-one matrices can be a rank-two matrix. Although arbitrary 
finite rank matrices do compose a finite dimensional vector space, proving positivity of $f_N(A, A)$ for arbitrary rank matrix $A$ is both
beyond our ability and our need, since proving positivity of $f_N(X_2, X_2)$ for arbitrary rank two matrix $X_2$ is, in fact, 
equivalent to Eq.\ref{needed}, and is enough for proof of N-undistillability.

\section{Conversion to a multi-variable function}

We make another attempt at this problem by seeing it as a multi-variable function.
For simplicity we consider only the real case of the 2-distillability problem,
that is, $X_1$, $X'_1$ are $d^2\times d^2$ real matrices.
For convenience of expression we use $f(C, D) = f_2(X_1, X'_1)$ to denote the function concerned.
Thus, the 2-undistillability problem 
is equivalent to proving the following inequality for all $d^2\times d^2$-dimensional rank one matrices $C, D$:
\begin{equation}
    f^2(C, D) \leq f(C, C) f(D, D),
\end{equation}
where 
\begin{equation}
    \begin{aligned}
    &f(C, D) = Tr(C^{T}D) \\
    &+\beta (Tr(C_1^{T}D_1) + Tr(C_2^{T}D_2)) + \beta^2 Tr(C^{T})Tr(D),
    \end{aligned}
\end{equation}
where
we have used $Tr_1(C)=C_2, Tr_2(C)=C_1, Tr_1(D)=D_2, Tr_2(D)=D_1$ for simplicity.
A multi-variable function $g(C)$ is defined if we see $D$ as a constant $D_0$:
\begin{equation}
    g(C)_{D_0} = f(C, C)f(D_0, D_0) - f^2(C, D_0),
\end{equation}
where $D_0$ is an arbitrary $d^2\times d^2$-dimensional real matrix of rank 1. 
Both $C$ and $D_0$ can be written as outer products of $d^2$-dimensional real vectors 
$\mathbf{w}$, $\mathbf{x}, \mathbf{y}, \mathbf{z}$:
\begin{equation}
    C = \mathbf{w}\mathbf{x}^T, D_0 = \mathbf{y}\mathbf{z}^T.
\end{equation}
The multi variables of function $g(C)$ are taken to be the 
components of vector $\mathbf{w}, \mathbf{x}$, represented 
by $w_{ij}, x_{ij}$. 
Indices i and j take their values in $[0, 1,\cdots, d-1]$, 
and are combined together to represent the $(i\times d + j)$-th component of the vector.
The problem is therefore transformed into proving that 
multi-variable function $g(C)_{D_0}$ is always non-negative, 
for all variables $w_{ij}, x_{ij}$ 
and all possible parameters $y_{ij}, z_{ij}$.
It is clear that when 
$C = D_0$, $g(D_0)_{D_0} = f^2(D_0, D_0) - f^2(D_0, D_0) = 0$, 
and we conjecture that these are global minimums.

We first prove that the gradients 
at these points are zero 
regardless of the parameters $\mathbf{y}, \mathbf{z}$ by calculating the Jacobian matrix, therefore
showing that the $C=D_0$ points are indeed critical points.
Then the Hessian matrix at this point is presented, in particular its three block parts, since
the Hessian matrix is Hermitian. We conjecture that it is positive semidefinite.
An additional proof of nonconvexity is also given in Appendix.\ref{non_con_proof}, thus eliminating the easy case where
any local minimum is also a global minimum.

\subsection{The gradient at \texorpdfstring{$C = D_0$}{Lg} is zero}

For simplicity all $g(C)_{D_0}$ are written as $g(C)$ in the following text.
Define 
\begin{equation}
    \begin{aligned}
    &h_{1, ij}(w,x) = 2w_{ij}\left(\sum_{pq} x_{pq}^2\right)\\
    &+ 2\beta\left[\sum_{pq}x_{pj}w_{iq}x_{pq} + \sum_{pq}x_{iq}w_{pj}x_{pq}\right] \\
    &+ 2 \beta^2 x_{ij} \left(\sum_{pq} w_{pq}x_{pq}\right)
    \end{aligned}
\end{equation}
\begin{equation}
    \begin{aligned}
    &h_{2, ij}(y, z, x) = y_{ij}\left(\sum_{pq} x_{pq}z_{pq}\right)\\
    &+ \beta\left[\sum_{pq}x_{pj}y_{iq}z_{pq} + \sum_{pq}x_{iq}y_{pj}z_{pq}\right] \\
    &+  \beta^2 x_{ij}\left(\sum_{pq} y_{pq}z_{pq}\right)
\end{aligned}
\end{equation}
The partial derivatives of $f(C, C)$ and $f(C, D_0)$ are:
\begin{equation}
    \frac{\partial f(C, C)}{\partial w_{ij}}=h_{1, ij}(w,x), \frac{\partial f(C, C)}{\partial x_{ij}}
    =h_{1, ij}(x,w)
\end{equation}
\begin{equation}
    \frac{\partial f(C, D_0)}{\partial w_{ij}}= h_{2, ij}(y, z, x), \frac{\partial f(C, D_0)}{\partial x_{ij}}
    =h_{2, ij}(z, y, w)
\end{equation}

Notice that $h_{1, ij}(w, x) = 2 h_{2, ij}(w, x, x), h_{1, ij}(x, w) = 2 h_{2, ij}(x, w, w)$.

Therefore, at the critical points where $C = D_0$,
\begin{equation}
    \frac{\partial f(C, D_0)}{\partial w_{ij}}|_{C = D_0} = \frac{1}{2}
    \frac{\partial f(C, C)}{\partial w_{ij}}|_{C = D_0},
\end{equation}
\begin{equation}
    \frac{\partial f(C, D_0)}{\partial x_{ij}}|_{C = D_0} = \frac{1}{2}
    \frac{\partial f(C, C)}{\partial x_{ij}}|_{C = D_0}.
\end{equation}
It then follows that the Jacobian at the $C = D_0$ point is zero:
\begin{equation}
    \begin{aligned}
    &\frac{\partial g(C)}{\partial w_{ij}}|_{C = D_0} = f(D_0, D_0) \frac{\partial f(C, C)}
    {\partial w_{ij}}|_{C = D_0}
    \\
    &-2\frac{\partial f(C, D_0)}{\partial w_{ij}}|_{C = D_0} f(D_0, D_0) = 0.
\end{aligned}
\end{equation}
\begin{equation}
    \begin{aligned}
    &\frac{\partial g(C)}{\partial x_{ij}}|_{C = D_0} = f(D_0, D_0) \frac{\partial f(C, C)}
    {\partial x_{ij}}|_{C = D_0}
    \\
    &-2\frac{\partial f(C, D_0)}{\partial x_{ij}}|_{C = D_0} f(D_0, D_0) = 0.
\end{aligned}
\end{equation}
It is then established that the gradients at the $C = D_0$ points are zero, and the points are
indeed critical points.

\subsection{Hessian matrix}

Hessian matrices at the critical points are presented.
Define:
\begin{equation}
    \begin{aligned}
    &h_{3, ijkl}(x) = 2\delta_{ik}\delta_{jl}\left(\sum_{pq} x^2_{pq}\right)\\
    & + 2\beta\left[\delta_{ik}\sum_{p}x_{pj}x_{pl} + \delta_{lj}\sum_q x_{iq} x_{kq}\right] \\
    &+ 2 \beta^2 x_{ij}x_{kl}
\end{aligned}
\end{equation}
\begin{equation}
    \begin{aligned}
    &h_{4, ijkl}(w, x) = 4w_{ij}x_{kl} \\
    &+ \beta\left[2\delta_{jl}(\sum_q w_{iq}x_{kq}) + x_{kj}w_{il} + 2\delta_{ik}(\sum_p w_{pj} x_{pl}) + x_{il}w_{kj}\right]\\
    & + 2\beta^2\left[\delta_{ik}\delta_{jl}\left(\sum_{pq}w_{pq}x_{pq}\right) + x_{ij}w_{kl}\right]
\end{aligned}
\end{equation}
\begin{equation}
    \begin{aligned}
    &h_{5, ijkl}(y, z) = y_{ij}z_{kl} \\
    &+ \beta\left[\delta_{jl}(\sum_q y_{iq}z_{kq}) +  \delta_{ik}(\sum_p y_{pj} z_{pl})\right]\\
    & + \beta^2 \delta_{ik}\delta_{jl}\left(\sum_{pq}y_{pq}z_{pq}\right)
\end{aligned}
\end{equation}
The Hessian matrices at $C=D_0$ can be explicitly written as:
\begin{equation}
    \begin{pmatrix}
        (\frac{\partial^2 g(C)}{\partial w_{ij}\partial w_{kl}})|_{C=D_0} & (\frac{\partial^2 g(C)}{\partial w_{ij}\partial x_{kl}})|_{C=D_0}\\
        (\frac{\partial^2 g(C)}{\partial x_{ij}\partial w_{kl}})|_{C=D_0} & (\frac{\partial^2 g(C)}{\partial x_{ij}\partial x_{kl}})|_{C=D_0} \\
    \end{pmatrix}.
\end{equation}
where the three independent parts of Hessian matrix are: 
\begin{equation}
    \begin{aligned}
    &\frac{\partial^2 g(C)}{\partial w_{ij}\partial w_{kl}}|_{C=D_0}  \\
    &= f(D_0, D_0)h_{3, ijkl}(x)
    -2h_{2, ij}(w, x, x)h_{2, kl}(w, x, x),
\end{aligned}
\end{equation}
\begin{equation}
    \begin{aligned}
    &\frac{\partial^2 g(C)}{\partial x_{ij}\partial x_{kl}}|_{C=D_0} \\
    &= f(D_0, D_0)h_{3, ijkl}(w)
    -2h_{2, ij}(x, w, w)h_{2, kl}(x, w, w),
\end{aligned}
\end{equation}

\begin{equation}
    \begin{aligned}
    &\frac{\partial^2 g(C)}{\partial w_{ij}\partial x_{kl}}|_{C=D_0} = f(D_0, D_0)h_{4, ijkl}(w, x)\\
    &-2h_{5, ijkl}(w, x)f(D_0, D_0) - 2h_{2, ij}(w, x, w)h_{2, kl}(x, w, w).
\end{aligned}
\end{equation}

We conjecture that this Hessian matrix is positive semidefinite, thus making the critical points
local minimums. In Appendix \ref{non_con_proof} we prove that the function is non-convex,
which means that further scrutinization is needed for characterization of this function.

\section*{Conclusion and Discussion}
We've broken down the process of verifying bound entanglement into iterative steps of 
1-undistillability verifications, by noticing that N-undistillability for arbitrary N is
equivalent to $2^N$-undistillability for arbitrary N, and then utilizing specific symmetric properties.
In the Werner state case, 
new bounds for N-undistillability of any finite N are presented,
a result similar to that of \cite{9910026}, \cite{9910022} and \cite{Chen_2016}, 
but different in the sense that our bounds are unaffected 
by the dimensionality of Werner states. Alternative 
expressions for inequalities applicable to both rank two and rank one matrices are given. 
Subsequently, the problem of two-undistillability is converted into a matrix 
analysis problem. 
The multi-variable 
function treatment is also attempted, as well as proving critical points, non-convexity and conjecturing
about Hessian positivity. 

Recently, the problem of N-copy Werner state distillability has been reformulated in \cite{2310.05726} into
a set of partial trace inequalities, which coincided with our Theorem \ref{ineq1}. 
In the regime of 2-copy Werner state distillability,
A special case of the matrix being the sum of a rank-one matrix
and a normal matrix is proved. 

We believe the new perspectives presented here will
assist further attempts at this famous open problem.

\acknowledgments
We would like to thank Prof. Fedor Sukochev, Dr. Dmitriy Zanin and Prof. Zhi Yin for helpful comments and valuable insights about partial trace inequalities.

\bibliography{NPT_bound}

%apsrev4-2.bst 2019-01-14 (MD) hand-edited version of apsrev4-1.bst
%Control: key (0)
%Control: author (8) initials jnrlst
%Control: editor formatted (1) identically to author
%Control: production of article title (0) allowed
%Control: page (0) single
%Control: year (1) truncated
%Control: production of eprint (0) enabled
\providecommand{\noopsort}[1]{}\providecommand{\singleletter}[1]{#1}%
\begin{thebibliography}{30}%
\makeatletter
\providecommand \@ifxundefined [1]{%
 \@ifx{#1\undefined}
}%
\providecommand \@ifnum [1]{%
 \ifnum #1\expandafter \@firstoftwo
 \else \expandafter \@secondoftwo
 \fi
}%
\providecommand \@ifx [1]{%
 \ifx #1\expandafter \@firstoftwo
 \else \expandafter \@secondoftwo
 \fi
}%
\providecommand \natexlab [1]{#1}%
\providecommand \enquote  [1]{``#1''}%
\providecommand \bibnamefont  [1]{#1}%
\providecommand \bibfnamefont [1]{#1}%
\providecommand \citenamefont [1]{#1}%
\providecommand \href@noop [0]{\@secondoftwo}%
\providecommand \href [0]{\begingroup \@sanitize@url \@href}%
\providecommand \@href[1]{\@@startlink{#1}\@@href}%
\providecommand \@@href[1]{\endgroup#1\@@endlink}%
\providecommand \@sanitize@url [0]{\catcode `\\12\catcode `\$12\catcode
  `\&12\catcode `\#12\catcode `\^12\catcode `\_12\catcode `\%12\relax}%
\providecommand \@@startlink[1]{}%
\providecommand \@@endlink[0]{}%
\providecommand \url  [0]{\begingroup\@sanitize@url \@url }%
\providecommand \@url [1]{\endgroup\@href {#1}{\urlprefix }}%
\providecommand \urlprefix  [0]{URL }%
\providecommand \Eprint [0]{\href }%
\providecommand \doibase [0]{https://doi.org/}%
\providecommand \selectlanguage [0]{\@gobble}%
\providecommand \bibinfo  [0]{\@secondoftwo}%
\providecommand \bibfield  [0]{\@secondoftwo}%
\providecommand \translation [1]{[#1]}%
\providecommand \BibitemOpen [0]{}%
\providecommand \bibitemStop [0]{}%
\providecommand \bibitemNoStop [0]{.\EOS\space}%
\providecommand \EOS [0]{\spacefactor3000\relax}%
\providecommand \BibitemShut  [1]{\csname bibitem#1\endcsname}%
\let\auto@bib@innerbib\@empty
%</preamble>
\bibitem [{\citenamefont {Horodecki}\ \emph {et~al.}(2009)\citenamefont
  {Horodecki}, \citenamefont {Horodecki}, \citenamefont {Horodecki},\ and\
  \citenamefont {Horodecki}}]{RevModPhys.81.865}%
  \BibitemOpen
  \bibfield  {author} {\bibinfo {author} {\bibfnamefont {R.}~\bibnamefont
  {Horodecki}}, \bibinfo {author} {\bibfnamefont {P.}~\bibnamefont
  {Horodecki}}, \bibinfo {author} {\bibfnamefont {M.}~\bibnamefont
  {Horodecki}},\ and\ \bibinfo {author} {\bibfnamefont {K.}~\bibnamefont
  {Horodecki}},\ }\bibfield  {title} {\bibinfo {title} {Quantum entanglement},\
  }\href {https://doi.org/10.1103/RevModPhys.81.865} {\bibfield  {journal}
  {\bibinfo  {journal} {Rev. Mod. Phys.}\ }\textbf {\bibinfo {volume} {81}},\
  \bibinfo {pages} {865} (\bibinfo {year} {2009})}\BibitemShut {NoStop}%
\bibitem [{\citenamefont {Huber}\ \emph {et~al.}(2018)\citenamefont {Huber},
  \citenamefont {Lami}, \citenamefont {Lancien},\ and\ \citenamefont
  {Müller-Hermes}}]{Huber_2018}%
  \BibitemOpen
  \bibfield  {author} {\bibinfo {author} {\bibfnamefont {M.}~\bibnamefont
  {Huber}}, \bibinfo {author} {\bibfnamefont {L.}~\bibnamefont {Lami}},
  \bibinfo {author} {\bibfnamefont {C.}~\bibnamefont {Lancien}},\ and\ \bibinfo
  {author} {\bibfnamefont {A.}~\bibnamefont {Müller-Hermes}},\ }\bibfield
  {title} {\bibinfo {title} {High-dimensional entanglement in states with
  positive partial transposition},\ }\bibfield  {journal} {\bibinfo  {journal}
  {Physical Review Letters}\ }\textbf {\bibinfo {volume} {121}},\ \href
  {https://doi.org/10.1103/physrevlett.121.200503}
  {10.1103/physrevlett.121.200503} (\bibinfo {year} {2018})\BibitemShut
  {NoStop}%
\bibitem [{\citenamefont {Bennett}\ \emph {et~al.}(1996)\citenamefont
  {Bennett}, \citenamefont {Brassard}, \citenamefont {Popescu}, \citenamefont
  {Schumacher}, \citenamefont {Smolin},\ and\ \citenamefont
  {Wootters}}]{Bennett_1996}%
  \BibitemOpen
  \bibfield  {author} {\bibinfo {author} {\bibfnamefont {C.~H.}\ \bibnamefont
  {Bennett}}, \bibinfo {author} {\bibfnamefont {G.}~\bibnamefont {Brassard}},
  \bibinfo {author} {\bibfnamefont {S.}~\bibnamefont {Popescu}}, \bibinfo
  {author} {\bibfnamefont {B.}~\bibnamefont {Schumacher}}, \bibinfo {author}
  {\bibfnamefont {J.~A.}\ \bibnamefont {Smolin}},\ and\ \bibinfo {author}
  {\bibfnamefont {W.~K.}\ \bibnamefont {Wootters}},\ }\bibfield  {title}
  {\bibinfo {title} {Purification of noisy entanglement and faithful
  teleportation via noisy channels},\ }\href
  {https://doi.org/10.1103/physrevlett.76.722} {\bibfield  {journal} {\bibinfo
  {journal} {Physical Review Letters}\ }\textbf {\bibinfo {volume} {76}},\
  \bibinfo {pages} {722–725} (\bibinfo {year} {1996})}\BibitemShut {NoStop}%
\bibitem [{\citenamefont {Horodecki}\ \emph {et~al.}(1998)\citenamefont
  {Horodecki}, \citenamefont {Horodecki},\ and\ \citenamefont
  {Horodecki}}]{9801069}%
  \BibitemOpen
  \bibfield  {author} {\bibinfo {author} {\bibfnamefont {M.}~\bibnamefont
  {Horodecki}}, \bibinfo {author} {\bibfnamefont {P.}~\bibnamefont
  {Horodecki}},\ and\ \bibinfo {author} {\bibfnamefont {R.}~\bibnamefont
  {Horodecki}},\ }\bibfield  {title} {\bibinfo {title} {Mixed-state
  entanglement and distillation: Is there a “bound” entanglement in
  nature?},\ }\href {https://doi.org/10.1103/physrevlett.80.5239} {\bibfield
  {journal} {\bibinfo  {journal} {Physical Review Letters}\ }\textbf {\bibinfo
  {volume} {80}},\ \bibinfo {pages} {5239–5242} (\bibinfo {year}
  {1998})}\BibitemShut {NoStop}%
\bibitem [{\citenamefont {Horodecki}\ \emph {et~al.}(2020)\citenamefont
  {Horodecki}, \citenamefont {Łukasz Rudnicki},\ and\ \citenamefont
  {Życzkowski}}]{horodecki2020open}%
  \BibitemOpen
  \bibfield  {author} {\bibinfo {author} {\bibfnamefont {P.}~\bibnamefont
  {Horodecki}}, \bibinfo {author} {\bibnamefont {Łukasz Rudnicki}},\ and\
  \bibinfo {author} {\bibfnamefont {K.}~\bibnamefont {Życzkowski}},\
  }\href@noop {} {\bibinfo {title} {Five open problems in quantum information}}
  (\bibinfo {year} {2020}),\ \Eprint {https://arxiv.org/abs/2002.03233}
  {arXiv:2002.03233 [quant-ph]} \BibitemShut {NoStop}%
\bibitem [{\citenamefont {Chen}\ and\ \citenamefont
  {Dokovi{\'c}}(2016{\natexlab{a}})}]{Chen_2016_2}%
  \BibitemOpen
  \bibfield  {author} {\bibinfo {author} {\bibfnamefont {L.}~\bibnamefont
  {Chen}}\ and\ \bibinfo {author} {\bibfnamefont {D.~u.}\ \bibnamefont
  {Dokovi{\'c}}},\ }\bibfield  {title} {\bibinfo {title} {Distillability of
  non-positive-partial-transpose bipartite quantum states of rank four},\
  }\bibfield  {journal} {\bibinfo  {journal} {Physical Review A}\ }\textbf
  {\bibinfo {volume} {94}},\ \href {https://doi.org/10.1103/physreva.94.052318}
  {10.1103/physreva.94.052318} (\bibinfo {year}
  {2016}{\natexlab{a}})\BibitemShut {NoStop}%
\bibitem [{\citenamefont {Chen}\ \emph {et~al.}(2023)\citenamefont {Chen},
  \citenamefont {Feng},\ and\ \citenamefont {He}}]{Chen_2023dam}%
  \BibitemOpen
  \bibfield  {author} {\bibinfo {author} {\bibfnamefont {L.}~\bibnamefont
  {Chen}}, \bibinfo {author} {\bibfnamefont {C.}~\bibnamefont {Feng}},\ and\
  \bibinfo {author} {\bibfnamefont {H.}~\bibnamefont {He}},\ }\bibfield
  {title} {\bibinfo {title} {{On the distillability problem of two-copy
  $4\times 4$ Werner states: matrices A,~B of at most four nonzero entries}},\
  }\href {https://doi.org/10.1007/s11128-023-03850-4} {\bibfield  {journal}
  {\bibinfo  {journal} {Quant. Inf. Proc.}\ }\textbf {\bibinfo {volume} {22}},\
  \bibinfo {pages} {108} (\bibinfo {year} {2023})}\BibitemShut {NoStop}%
\bibitem [{\citenamefont {Chen}\ \emph {et~al.}(2021)\citenamefont {Chen},
  \citenamefont {He}, \citenamefont {Shi},\ and\ \citenamefont
  {Zhao}}]{Chen_2021}%
  \BibitemOpen
  \bibfield  {author} {\bibinfo {author} {\bibfnamefont {L.}~\bibnamefont
  {Chen}}, \bibinfo {author} {\bibfnamefont {H.}~\bibnamefont {He}}, \bibinfo
  {author} {\bibfnamefont {X.}~\bibnamefont {Shi}},\ and\ \bibinfo {author}
  {\bibfnamefont {L.-J.}\ \bibnamefont {Zhao}},\ }\bibfield  {title} {\bibinfo
  {title} {{Proving the distillability problem of two-copy $4\times 4$ Werner
  states for monomial matrices}},\ }\href
  {https://doi.org/10.1007/s11128-021-03098-w} {\bibfield  {journal} {\bibinfo
  {journal} {Quant. Inf. Proc.}\ }\textbf {\bibinfo {volume} {20}},\ \bibinfo
  {pages} {157} (\bibinfo {year} {2021})}\BibitemShut {NoStop}%
\bibitem [{\citenamefont {Qian}\ \emph {et~al.}(2019)\citenamefont {Qian},
  \citenamefont {Chen}, \citenamefont {Chu},\ and\ \citenamefont
  {Shen}}]{qian2019matrix}%
  \BibitemOpen
  \bibfield  {author} {\bibinfo {author} {\bibfnamefont {L.}~\bibnamefont
  {Qian}}, \bibinfo {author} {\bibfnamefont {L.}~\bibnamefont {Chen}}, \bibinfo
  {author} {\bibfnamefont {D.}~\bibnamefont {Chu}},\ and\ \bibinfo {author}
  {\bibfnamefont {Y.}~\bibnamefont {Shen}},\ }\href@noop {} {\bibinfo {title}
  {A matrix inequality related to the entanglement distillation problem}}
  (\bibinfo {year} {2019}),\ \Eprint {https://arxiv.org/abs/1908.02428}
  {arXiv:1908.02428 [quant-ph]} \BibitemShut {NoStop}%
\bibitem [{\citenamefont {Ding}\ and\ \citenamefont
  {Chen}(2023)}]{ding2023entanglement}%
  \BibitemOpen
  \bibfield  {author} {\bibinfo {author} {\bibfnamefont {T.}~\bibnamefont
  {Ding}}\ and\ \bibinfo {author} {\bibfnamefont {L.}~\bibnamefont {Chen}},\
  }\href@noop {} {\bibinfo {title} {Entanglement distillation in terms of
  schmidt rank and matrix rank}} (\bibinfo {year} {2023}),\ \Eprint
  {https://arxiv.org/abs/2304.05563} {arXiv:2304.05563 [quant-ph]} \BibitemShut
  {NoStop}%
\bibitem [{\citenamefont {Pankowski}\ \emph {et~al.}(2013)\citenamefont
  {Pankowski}, \citenamefont {Brandao}, \citenamefont {Horodecki},\ and\
  \citenamefont {Smith}}]{pankowski2013entanglement}%
  \BibitemOpen
  \bibfield  {author} {\bibinfo {author} {\bibfnamefont {L.}~\bibnamefont
  {Pankowski}}, \bibinfo {author} {\bibfnamefont {F.~G. S.~L.}\ \bibnamefont
  {Brandao}}, \bibinfo {author} {\bibfnamefont {M.}~\bibnamefont {Horodecki}},\
  and\ \bibinfo {author} {\bibfnamefont {G.}~\bibnamefont {Smith}},\
  }\href@noop {} {\bibinfo {title} {Entanglement distillation by extendible
  maps}} (\bibinfo {year} {2013}),\ \Eprint {https://arxiv.org/abs/1109.1779}
  {arXiv:1109.1779 [quant-ph]} \BibitemShut {NoStop}%
\bibitem [{\citenamefont {Lami}\ \emph {et~al.}(2023)\citenamefont {Lami},
  \citenamefont {Regula},\ and\ \citenamefont {Streltsov}}]{lami2023catalysis}%
  \BibitemOpen
  \bibfield  {author} {\bibinfo {author} {\bibfnamefont {L.}~\bibnamefont
  {Lami}}, \bibinfo {author} {\bibfnamefont {B.}~\bibnamefont {Regula}},\ and\
  \bibinfo {author} {\bibfnamefont {A.}~\bibnamefont {Streltsov}},\ }\href@noop
  {} {\bibinfo {title} {Catalysis cannot overcome bound entanglement}}
  (\bibinfo {year} {2023}),\ \Eprint {https://arxiv.org/abs/2305.03489}
  {arXiv:2305.03489 [quant-ph]} \BibitemShut {NoStop}%
\bibitem [{\citenamefont {Chitambar}\ \emph {et~al.}(2020)\citenamefont
  {Chitambar}, \citenamefont {de~Vicente}, \citenamefont {Girard},\ and\
  \citenamefont {Gour}}]{Chitambar_2020}%
  \BibitemOpen
  \bibfield  {author} {\bibinfo {author} {\bibfnamefont {E.}~\bibnamefont
  {Chitambar}}, \bibinfo {author} {\bibfnamefont {J.~I.}\ \bibnamefont
  {de~Vicente}}, \bibinfo {author} {\bibfnamefont {M.~W.}\ \bibnamefont
  {Girard}},\ and\ \bibinfo {author} {\bibfnamefont {G.}~\bibnamefont {Gour}},\
  }\bibfield  {title} {\bibinfo {title} {Entanglement manipulation beyond local
  operations and classical communication},\ }\bibfield  {journal} {\bibinfo
  {journal} {Journal of Mathematical Physics}\ }\textbf {\bibinfo {volume}
  {61}},\ \href {https://doi.org/10.1063/1.5124109} {10.1063/1.5124109}
  (\bibinfo {year} {2020})\BibitemShut {NoStop}%
\bibitem [{\citenamefont {Brandão}\ \emph {et~al.}(2011)\citenamefont
  {Brandão}, \citenamefont {Christandl},\ and\ \citenamefont
  {Yard}}]{Brand_o_2011}%
  \BibitemOpen
  \bibfield  {author} {\bibinfo {author} {\bibfnamefont {F.~G. S.~L.}\
  \bibnamefont {Brandão}}, \bibinfo {author} {\bibfnamefont {M.}~\bibnamefont
  {Christandl}},\ and\ \bibinfo {author} {\bibfnamefont {J.}~\bibnamefont
  {Yard}},\ }\bibfield  {title} {\bibinfo {title} {Faithful squashed
  entanglement},\ }\href {https://doi.org/10.1007/s00220-011-1302-1} {\bibfield
   {journal} {\bibinfo  {journal} {Communications in Mathematical Physics}\
  }\textbf {\bibinfo {volume} {306}},\ \bibinfo {pages} {805–830} (\bibinfo
  {year} {2011})}\BibitemShut {NoStop}%
\bibitem [{\citenamefont {Johnston}(2011)}]{Johnston_2011}%
  \BibitemOpen
  \bibfield  {author} {\bibinfo {author} {\bibfnamefont {N.}~\bibnamefont
  {Johnston}},\ }\bibfield  {title} {\bibinfo {title} {Characterizing
  operations preserving separability measures via linear preserver problems},\
  }\href {https://doi.org/10.1080/03081087.2011.596540} {\bibfield  {journal}
  {\bibinfo  {journal} {Linear and Multilinear Algebra}\ }\textbf {\bibinfo
  {volume} {59}},\ \bibinfo {pages} {1171–1187} (\bibinfo {year}
  {2011})}\BibitemShut {NoStop}%
\bibitem [{\citenamefont {Sharma}\ and\ \citenamefont
  {Pandey}(2016)}]{sharma2016dzyaloshinskiimoriya}%
  \BibitemOpen
  \bibfield  {author} {\bibinfo {author} {\bibfnamefont {K.~K.}\ \bibnamefont
  {Sharma}}\ and\ \bibinfo {author} {\bibfnamefont {S.~N.}\ \bibnamefont
  {Pandey}},\ }\href@noop {} {\bibinfo {title} {Dzyaloshinskii-moriya
  interaction as an agent to free the bound entangled states}} (\bibinfo {year}
  {2016}),\ \Eprint {https://arxiv.org/abs/1501.00942} {arXiv:1501.00942
  [quant-ph]} \BibitemShut {NoStop}%
\bibitem [{\citenamefont {van~der Eyden}\ \emph {et~al.}(2022)\citenamefont
  {van~der Eyden}, \citenamefont {Netzer},\ and\ \citenamefont {De~las
  Cuevas}}]{van_der_Eyden_2022}%
  \BibitemOpen
  \bibfield  {author} {\bibinfo {author} {\bibfnamefont {M.}~\bibnamefont
  {van~der Eyden}}, \bibinfo {author} {\bibfnamefont {T.}~\bibnamefont
  {Netzer}},\ and\ \bibinfo {author} {\bibfnamefont {G.}~\bibnamefont {De~las
  Cuevas}},\ }\bibfield  {title} {\bibinfo {title} {Halos and undecidability of
  tensor stable positive maps},\ }\href
  {https://doi.org/10.1088/1751-8121/ac726e} {\bibfield  {journal} {\bibinfo
  {journal} {Journal of Physics A: Mathematical and Theoretical}\ }\textbf
  {\bibinfo {volume} {55}},\ \bibinfo {pages} {264006} (\bibinfo {year}
  {2022})}\BibitemShut {NoStop}%
\bibitem [{\citenamefont {DiVincenzo}\ \emph {et~al.}(2000)\citenamefont
  {DiVincenzo}, \citenamefont {Shor}, \citenamefont {Smolin}, \citenamefont
  {Terhal},\ and\ \citenamefont {Thapliyal}}]{9910026}%
  \BibitemOpen
  \bibfield  {author} {\bibinfo {author} {\bibfnamefont {D.~P.}\ \bibnamefont
  {DiVincenzo}}, \bibinfo {author} {\bibfnamefont {P.~W.}\ \bibnamefont
  {Shor}}, \bibinfo {author} {\bibfnamefont {J.~A.}\ \bibnamefont {Smolin}},
  \bibinfo {author} {\bibfnamefont {B.~M.}\ \bibnamefont {Terhal}},\ and\
  \bibinfo {author} {\bibfnamefont {A.~V.}\ \bibnamefont {Thapliyal}},\
  }\bibfield  {title} {\bibinfo {title} {Evidence for bound entangled states
  with negative partial transpose},\ }\bibfield  {journal} {\bibinfo  {journal}
  {Physical Review A}\ }\textbf {\bibinfo {volume} {61}},\ \href
  {https://doi.org/10.1103/physreva.61.062312} {10.1103/physreva.61.062312}
  (\bibinfo {year} {2000})\BibitemShut {NoStop}%
\bibitem [{\citenamefont {Dür}\ \emph {et~al.}(2000)\citenamefont {Dür},
  \citenamefont {Cirac}, \citenamefont {Lewenstein},\ and\ \citenamefont
  {Bruß}}]{9910022}%
  \BibitemOpen
  \bibfield  {author} {\bibinfo {author} {\bibfnamefont {W.}~\bibnamefont
  {Dür}}, \bibinfo {author} {\bibfnamefont {J.~I.}\ \bibnamefont {Cirac}},
  \bibinfo {author} {\bibfnamefont {M.}~\bibnamefont {Lewenstein}},\ and\
  \bibinfo {author} {\bibfnamefont {D.}~\bibnamefont {Bruß}},\ }\bibfield
  {title} {\bibinfo {title} {Distillability and partial transposition in
  bipartite systems},\ }\bibfield  {journal} {\bibinfo  {journal} {Physical
  Review A}\ }\textbf {\bibinfo {volume} {61}},\ \href
  {https://doi.org/10.1103/physreva.61.062313} {10.1103/physreva.61.062313}
  (\bibinfo {year} {2000})\BibitemShut {NoStop}%
\bibitem [{\citenamefont {Chen}\ and\ \citenamefont
  {Dokovi{\'c}}(2016{\natexlab{b}})}]{Chen_2016}%
  \BibitemOpen
  \bibfield  {author} {\bibinfo {author} {\bibfnamefont {L.}~\bibnamefont
  {Chen}}\ and\ \bibinfo {author} {\bibfnamefont {D.~u.}\ \bibnamefont
  {Dokovi{\'c}}},\ }\bibfield  {title} {\bibinfo {title} {Distillability of
  non-positive-partial-transpose bipartite quantum states of rank four},\
  }\bibfield  {journal} {\bibinfo  {journal} {Physical Review A}\ }\textbf
  {\bibinfo {volume} {94}},\ \href {https://doi.org/10.1103/physreva.94.052318}
  {10.1103/physreva.94.052318} (\bibinfo {year}
  {2016}{\natexlab{b}})\BibitemShut {NoStop}%
\bibitem [{\citenamefont {Watrous}(2004)}]{0312123}%
  \BibitemOpen
  \bibfield  {author} {\bibinfo {author} {\bibfnamefont {J.}~\bibnamefont
  {Watrous}},\ }\bibfield  {title} {\bibinfo {title} {Many copies may be
  required for entanglement distillation},\ }\bibfield  {journal} {\bibinfo
  {journal} {Physical Review Letters}\ }\textbf {\bibinfo {volume} {93}},\
  \href {https://doi.org/10.1103/physrevlett.93.010502}
  {10.1103/physrevlett.93.010502} (\bibinfo {year} {2004})\BibitemShut
  {NoStop}%
\bibitem [{\citenamefont {Werner}(1989)}]{PhysRevA.40.4277}%
  \BibitemOpen
  \bibfield  {author} {\bibinfo {author} {\bibfnamefont {R.~F.}\ \bibnamefont
  {Werner}},\ }\bibfield  {title} {\bibinfo {title} {Quantum states with
  einstein-podolsky-rosen correlations admitting a hidden-variable model},\
  }\href {https://doi.org/10.1103/PhysRevA.40.4277} {\bibfield  {journal}
  {\bibinfo  {journal} {Phys. Rev. A}\ }\textbf {\bibinfo {volume} {40}},\
  \bibinfo {pages} {4277} (\bibinfo {year} {1989})}\BibitemShut {NoStop}%
\bibitem [{\citenamefont {Horodecki}\ and\ \citenamefont
  {Horodecki}(1998)}]{9708015}%
  \BibitemOpen
  \bibfield  {author} {\bibinfo {author} {\bibfnamefont {M.}~\bibnamefont
  {Horodecki}}\ and\ \bibinfo {author} {\bibfnamefont {P.}~\bibnamefont
  {Horodecki}},\ }\href@noop {} {\bibinfo {title} {Reduction criterion of
  separability and limits for a class of protocols of entanglement
  distillation}} (\bibinfo {year} {1998}),\ \Eprint
  {https://arxiv.org/abs/quant-ph/9708015} {arXiv:quant-ph/9708015 [quant-ph]}
  \BibitemShut {NoStop}%
\bibitem [{\citenamefont {Pankowski}\ \emph {et~al.}(2010)\citenamefont
  {Pankowski}, \citenamefont {Piani}, \citenamefont {Horodecki},\ and\
  \citenamefont {Horodecki}}]{5508622}%
  \BibitemOpen
  \bibfield  {author} {\bibinfo {author} {\bibfnamefont {{\L}.}~\bibnamefont
  {Pankowski}}, \bibinfo {author} {\bibfnamefont {M.}~\bibnamefont {Piani}},
  \bibinfo {author} {\bibfnamefont {M.}~\bibnamefont {Horodecki}},\ and\
  \bibinfo {author} {\bibfnamefont {P.}~\bibnamefont {Horodecki}},\ }\bibfield
  {title} {\bibinfo {title} {A few steps more towards npt bound entanglement},\
  }\href {https://doi.org/10.1109/TIT.2010.2050810} {\bibfield  {journal}
  {\bibinfo  {journal} {IEEE Transactions on Information Theory}\ }\textbf
  {\bibinfo {volume} {56}},\ \bibinfo {pages} {4085} (\bibinfo {year}
  {2010})}\BibitemShut {NoStop}%
\bibitem [{\citenamefont {Dokovi{\'c}}(2016)}]{1003.4337}%
  \BibitemOpen
  \bibfield  {author} {\bibinfo {author} {\bibfnamefont {D.}~\bibnamefont
  {Dokovi{\'c}}},\ }\bibfield  {title} {\bibinfo {title} {On two-distillable
  werner states},\ }\href {https://doi.org/10.3390/e18060216} {\bibfield
  {journal} {\bibinfo  {journal} {Entropy}\ }\textbf {\bibinfo {volume} {18}},\
  \bibinfo {pages} {216} (\bibinfo {year} {2016})}\BibitemShut {NoStop}%
\bibitem [{\citenamefont {Peres}(1996)}]{9604005}%
  \BibitemOpen
  \bibfield  {author} {\bibinfo {author} {\bibfnamefont {A.}~\bibnamefont
  {Peres}},\ }\bibfield  {title} {\bibinfo {title} {Separability criterion for
  density matrices},\ }\href {https://doi.org/10.1103/physrevlett.77.1413}
  {\bibfield  {journal} {\bibinfo  {journal} {Physical Review Letters}\
  }\textbf {\bibinfo {volume} {77}},\ \bibinfo {pages} {1413} (\bibinfo {year}
  {1996})}\BibitemShut {NoStop}%
\bibitem [{\citenamefont {Nielsen}\ and\ \citenamefont
  {Chuang}(2010)}]{Nielsen_Chuang_2010}%
  \BibitemOpen
  \bibfield  {author} {\bibinfo {author} {\bibfnamefont {M.~A.}\ \bibnamefont
  {Nielsen}}\ and\ \bibinfo {author} {\bibfnamefont {I.~L.}\ \bibnamefont
  {Chuang}},\ }\href@noop {} {\emph {\bibinfo {title} {Quantum Computation and
  Quantum Information: 10th Anniversary Edition}}}\ (\bibinfo  {publisher}
  {Cambridge University Press},\ \bibinfo {year} {2010})\BibitemShut {NoStop}%
\bibitem [{\citenamefont {Horodecki}\ \emph {et~al.}(1997)\citenamefont
  {Horodecki}, \citenamefont {Horodecki},\ and\ \citenamefont
  {Horodecki}}]{PhysRevLett.78.574}%
  \BibitemOpen
  \bibfield  {author} {\bibinfo {author} {\bibfnamefont {M.}~\bibnamefont
  {Horodecki}}, \bibinfo {author} {\bibfnamefont {P.}~\bibnamefont
  {Horodecki}},\ and\ \bibinfo {author} {\bibfnamefont {R.}~\bibnamefont
  {Horodecki}},\ }\bibfield  {title} {\bibinfo {title} {Inseparable two spin-
  $\frac{1}{2}$ density matrices can be distilled to a singlet form},\ }\href
  {https://doi.org/10.1103/PhysRevLett.78.574} {\bibfield  {journal} {\bibinfo
  {journal} {Phys. Rev. Lett.}\ }\textbf {\bibinfo {volume} {78}},\ \bibinfo
  {pages} {574} (\bibinfo {year} {1997})}\BibitemShut {NoStop}%
\bibitem [{\citenamefont {Rico}(2023)}]{2310.05726}%
  \BibitemOpen
  \bibfield  {author} {\bibinfo {author} {\bibfnamefont {P.~C.}\ \bibnamefont
  {Rico}},\ }\href@noop {} {\bibinfo {title} {New partial trace inequalities
  and distillability of werner states}} (\bibinfo {year} {2023}),\ \Eprint
  {https://arxiv.org/abs/2310.05726} {arXiv:2310.05726 [math-ph]} \BibitemShut
  {NoStop}%
\bibitem [{\citenamefont {Horn}\ and\ \citenamefont
  {Johnson}(1991)}]{Horn_Johnson_1991}%
  \BibitemOpen
  \bibfield  {author} {\bibinfo {author} {\bibfnamefont {R.~A.}\ \bibnamefont
  {Horn}}\ and\ \bibinfo {author} {\bibfnamefont {C.~R.}\ \bibnamefont
  {Johnson}},\ }\href@noop {} {\emph {\bibinfo {title} {Topics in Matrix
  Analysis}}}\ (\bibinfo  {publisher} {Cambridge University Press},\ \bibinfo
  {year} {1991})\BibitemShut {NoStop}%
\end{thebibliography}%

\appendix

\section{Proof of Lemma \ref{fact}}\label{fact_proof}

\begin{lemma}
    Let $|\psi\rangle$ be an arbitrary pure quantum state and
        $s_j$ are the Schmidt coefficients of $|\psi\rangle$ arranged in descending order, then 
        \begin{equation}
           \;\sum_{j=1}^k s_j^2= max_{|\phi^{SR\leq k}\rangle} |\langle \psi|\phi^{SR\leq k}\rangle |^2,
        \end{equation}
        where $|\phi^{SR\leq k}\rangle$ is an arbitrary quantum pure state with Schimidt rank being less than k, namely, $SR(|\phi^{SR\leq k}\rangle)\leq k$.
\end{lemma}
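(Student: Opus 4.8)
The plan is to prove the two inequalities separately, since the claim is a maximum identity. For the ``$\geq$ is attained'' direction I would exhibit an explicit optimizer. Write the Schmidt decomposition $|\psi\rangle=\sum_i s_i|u_iv_i\rangle$ with $s_1\geq s_2\geq\cdots\geq 0$ and assume without loss of generality $\sum_{j=1}^k s_j^2>0$ (otherwise everything vanishes and the identity is trivial). Take the normalized truncation
\begin{equation}
    |\phi_\star\rangle=\Big(\sum_{j=1}^k s_j^2\Big)^{-1/2}\sum_{j=1}^k s_j|u_jv_j\rangle .
\end{equation}
This is a unit vector, it has Schmidt rank $\leq k$ by construction, and a direct computation gives $|\langle\psi|\phi_\star\rangle|^2=\sum_{j=1}^k s_j^2$. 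Hence the maximum over $|\phi^{SR\leq k}\rangle$ is at least $\sum_{j=1}^k s_j^2$.

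For the reverse bound I would pass through the state-operator isomorphism of Definition~\ref{iso}. Set $A=\Psi(|\psi\rangle)$ and $B=\Psi(|\phi^{SR\leq k}\rangle)$; by the correspondence recorded after Definition~\ref{iso}, the singular values of $A$ are exactly the $s_i$, $rank(B)\leq k$, $\|A\|_F=\|B\|_F=1$, and $\langle\psi|\phi^{SR\leq k}\rangle=Tr(A^\dagger B)$ (the conjugation convention is irrelevant for the modulus). Von Neumann's trace inequality gives $|Tr(A^\dagger B)|\leq\sum_i\sigma_i(A)\sigma_i(B)=\sum_{i=1}^k s_i\,\sigma_i(B)$, the last step because $B$ has at most $k$ nonzero singular values. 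Cauchy--Schwarz on the $k$-vectors $(s_1,\dots,s_k)$ and $(\sigma_1(B),\dots,\sigma_k(B))$ then yields $\sum_{i=1}^k s_i\,\sigma_i(B)\leq\big(\sum_{i=1}^k s_i^2\big)^{1/2}\big(\sum_{i=1}^k\sigma_i(B)^2\big)^{1/2}=\big(\sum_{j=1}^k s_j^2\big)^{1/2}$, using $\sum_i\sigma_i(B)^2=\|B\|_F^2=1$. Squaring closes the argument. As a self-contained alternative avoiding the isomorphism, one can argue in Hilbert space: every $|\phi^{SR\leq k}\rangle$ lies in $S_A\otimes\mathcal{H}_B$ for some $k$-dimensional $S_A\subseteq\mathcal{H}_A$ with orthogonal projector $P$, so $|\langle\psi|\phi^{SR\leq k}\rangle|\leq\|(P\otimes I)|\psi\rangle\|=(\sum_i s_i^2\,\langle u_i|P|u_i\rangle)^{1/2}$, and one then maximizes the linear functional $\sum_i s_i^2 t_i$ over $t_i=\langle u_i|P|u_i\rangle\in[0,1]$ subject to $\sum_i t_i=Tr(P)=k$; the maximum over this polytope equals $\sum_{j=1}^k s_j^2$, attained at $t_i=1$ for $i\leq k$ (which corresponds to $S_A=\mathrm{span}\{u_1,\dots,u_k\}$, i.e. exactly the optimizer $|\phi_\star\rangle$ above).

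I do not anticipate a genuine obstacle: this is in essence the Eckart--Young / Ky Fan statement about partial sums of singular values, read through the state-operator dictionary, and both halves are short. The only points needing mild care are bookkeeping ones — that the Schmidt-rank constraint on $|\phi\rangle$ translates precisely into ``at most $k$ nonzero singular values of $B$,'' and that it is the normalization $\|B\|_F=1$, not a pointwise estimate $\sigma_i(B)\leq 1$, that drives the Cauchy--Schwarz step — together with, in the Hilbert-space version, the one-line rearrangement (Abel summation against the descending sequence $s_i^2$) justifying that the linear maximum over the polytope sits at the integral vertex picking the $k$ largest $s_i^2$. I would present the isomorphism-based computation as the main proof and keep the polytope argument as a remark.
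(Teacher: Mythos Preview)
Your main argument is correct and follows essentially the same route as the paper's proof: translate via the state--operator isomorphism, bound $|Tr(A^\dagger B)|$ by $\sum_{i\leq k} s_i(A)\,\sigma_i(B)$ (you invoke von Neumann's trace inequality directly; the paper reaches the same estimate in two steps via Horn--Johnson's Theorems~3.3.13(a) and~3.3.14(a)), and then finish with Cauchy--Schwarz using $\|B\|_F=1$. Your treatment is in fact slightly more complete, since you also exhibit the optimizer $|\phi_\star\rangle$ establishing attainment of the maximum --- the paper's appendix proves only the upper bound --- and your projector/polytope alternative is a correct self-contained variant not present in the paper.
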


\begin{proof}
State-operator isomorphism translates the problem to an equivalent form of proving:
\begin{equation}
    |Tr(A^{\dagger}B)|^2 \leq \sum_{i=1}^k s^2_i(A),
\end{equation}
where both $A$ and $B$ are $d\times d$ matrices, and as they are the result of state-operator isomorphism
from $|\psi\rangle$ and $|\phi^{SR\leq 2}\rangle$,
are of Frobenius norm 1 due to normalization conditions, with $B$ having rank k. 
We use $s_i(A)$ and $|\lambda_i(A)|$ 
to denote the singular values and absolute values of eigenvalues
of $A$ arranged in descending order.
\begin{equation}
    \begin{aligned}
    &|Tr(A^{\dagger}B)|^2 \leq (\sum_{i=1}^d |\lambda_i(A^{\dagger}B)|)^2 \leq (\sum_{i=1}^d s_i(A^{\dagger}B))^2\\
    &\leq (\sum_{i=1}^k s_i(A^{\dagger})s_i(B))^2 = (\sum_{i=1}^k s_i(A)s_i(B))^2,
    \end{aligned}
\end{equation}
where the second and third inequality are from Theorem 3.3.13(a) and Theorem 3.3.14(a) in \cite{Horn_Johnson_1991},
and $d\to k$ change of index holds because the rest of $B$'s singular values are all zeroes after $d$. 
respectively.
Then by Cauchy-Schwartz inequality,
\begin{equation}
    \begin{aligned}
    &|Tr(A^{\dagger}B)|^2 \leq (\sum_{i=1}^k s_i(A)s_i(B))^2 \leq (\sum_{i=1}^k s^2_i(A))(\sum_{j=1}^k s^2_j(B))\\
    &=\sum_{i=1}^k s^2_i(A),
    \end{aligned}
\end{equation}
where the last equality follows from the normalization condition of $B$.
\end{proof}

\section{Proof of Theorem \ref{beta}}\label{beta_proof}
\begin{theorem}
    Let $\beta_0$ be the zero point of $1 + (1+\beta)^{N} - (1-\beta)^{N}$ within $[-1, 0]$.
    When $\beta\geq \beta_0$, Werner state $\rho_w$ is N-undistillable.
\end{theorem}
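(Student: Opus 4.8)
The plan is to lift the explicit two-copy estimate of Section \RomanNumeralCaps{4} to arbitrary $N$. First I would invoke the equivalence already in hand (Eq.\ref{maineq}): $\rho_w$ is $N$-undistillable iff $\langle\psi^{SR\leq 2}|M_N(\rho_w^{T_A})^{\otimes N}M_N^{\dagger}|\psi^{SR\leq 2}\rangle\geq 0$ for every pure state $|\psi^{SR\leq 2}\rangle$ of Schmidt rank $\leq 2$. Writing $\rho_w^{T_A}=\frac{1}{d^2+\beta d}(I+\beta G)$ and expanding the tensor power,
\begin{equation*}
    (\rho_w^{T_A})^{\otimes N}=\frac{1}{(d^2+\beta d)^N}\sum_{S\subset\{1,\dots,N\}}\beta^{|S|}A^{(S)},
\end{equation*}
where $A^{(S)}$ denotes the tensor product whose $j$-th factor is $G$ for $j\in S$ and $I$ otherwise. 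Since $(d^2+\beta d)^N>0$ for $\beta>-d$, it suffices to estimate, for each $S$, the number $t_S:=\langle\psi^{SR\leq 2}|M_N A^{(S)}M_N^{\dagger}|\psi^{SR\leq 2}\rangle$.

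Second, the lower bound $t_S\geq 0$ is immediate: $I$ and $G=d|\Phi\rangle\langle\Phi|$ are positive semidefinite and $M_N$ is unitary (it merely permutes the computational basis), so $M_N A^{(S)}M_N^{\dagger}$ is positive semidefinite; moreover $t_{\emptyset}=\langle\psi^{SR\leq 2}|\psi^{SR\leq 2}\rangle=1$. For $|S|=k\geq 1$ the crux is the upper bound $t_S\leq 2$, obtained exactly as for $E(I\otimes G)E^{\dagger}$ in Section \RomanNumeralCaps{4}: the pairing performed by $M_N$ turns $M_N A^{(S)}M_N^{\dagger}$ into $d^{k}\sum_{a,b}|\chi_{ab}\rangle\langle\chi_{ab}|$, where the sum is over multi-indices $a,b$ fixing the $A$- and $B$-components outside $S$, and each $|\chi_{ab}\rangle$ is a fixed computational-basis vector on the slots outside $S$ tensored with a maximally entangled state of local dimension $d^{k}$ on the slots inside $S$; in particular $|\chi_{ab}\rangle$ is normalized with all $d^{k}$ Schmidt coefficients equal to $d^{-k/2}$. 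Decomposing $|\psi^{SR\leq 2}\rangle=\sum_{ab}p_{ab}|\psi^{SR\leq 2}_{ab}\rangle$ into the orthogonal normalized extractions indexed by the same fixed slots---each of Schmidt rank $\leq 2$ by the submatrix argument based on Def.\ref{iso}, and with $\sum_{ab}|p_{ab}|^2=1$---and applying Lemma \ref{fact} to each extraction gives $|\langle\psi^{SR\leq 2}_{ab}|\chi_{ab}\rangle|^2\leq 2/d^{k}$; hence $t_S=d^{k}\sum_{ab}|p_{ab}|^2|\langle\psi^{SR\leq 2}_{ab}|\chi_{ab}\rangle|^2\leq 2$.

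Third, I would combine the estimates. Since $\beta<0$, a subset $S$ with $|S|$ even and positive gives $\beta^{|S|}t_S\geq 0$, the empty set gives $1$, and a subset with $|S|$ odd gives $\beta^{|S|}t_S\geq 2\beta^{|S|}$. Grouping subsets by $k=|S|$ (each value attained $\binom{N}{k}$ times),
\begin{equation*}
    \begin{split}
    &(d^2+\beta d)^N\langle\psi^{SR\leq 2}|M_N(\rho_w^{T_A})^{\otimes N}M_N^{\dagger}|\psi^{SR\leq 2}\rangle\\
    &\geq 1+2\sum_{k\text{ odd}}\binom{N}{k}\beta^{k}=1+(1+\beta)^N-(1-\beta)^N,
    \end{split}
\end{equation*}
where the last equality is the binomial identity $\sum_{k\text{ odd}}\binom{N}{k}\beta^k=\frac{1}{2}[(1+\beta)^N-(1-\beta)^N]$. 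Finally, with $\phi(\beta):=1+(1+\beta)^N-(1-\beta)^N$ one has $\phi'(\beta)=N[(1+\beta)^{N-1}+(1-\beta)^{N-1}]>0$ on $(-1,1)$, while $\phi(-1)=1-2^N<0$ and $\phi(0)=1>0$, so $\phi$ has a unique zero $\beta_0\in(-1,0)$ and $\phi(\beta)\geq 0$ exactly when $\beta\geq\beta_0$; for such $\beta$ the quantity above is nonnegative, i.e.\ $\rho_w$ is $N$-undistillable.

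The main obstacle is the structural claim underlying the upper bound: one has to pin down the shape of $M_N A^{(S)}M_N^{\dagger}$ for an arbitrary $S$, checking that the ``$G$-slots'' really are glued by $M_N$ into a single maximally entangled block of local dimension $d^{|S|}$, and that the matching partition of $|\psi^{SR\leq 2}\rangle$ is orthogonal and rank-non-increasing. This is the $N$-copy version of the ``$s_2=s_4$'' bookkeeping of Section \RomanNumeralCaps{4}, and it is the only genuinely delicate point; once it is in place, Lemma \ref{fact} and the binomial identity for the odd-index partial sum finish the proof routinely.
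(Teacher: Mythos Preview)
Your proposal is correct and follows essentially the same route as the paper's own proof: expand $(\rho_w^{T_A})^{\otimes N}$ over subsets, recognize $M_N A^{(S)}M_N^{\dagger}$ as $d^{|S|}$ times a sum of rank-one projectors onto states with $d^{|S|}$ equal Schmidt coefficients, decompose $|\psi^{SR\leq 2}\rangle$ into orthogonal extractions whose Schmidt rank is controlled by the submatrix argument, apply Lemma~\ref{fact} to bound each term by $2$, discard the nonnegative even-$|S|$ contributions, and finish with the odd-index binomial identity. Your explicit monotonicity check that $\phi(\beta)=1+(1+\beta)^N-(1-\beta)^N$ has a unique zero in $[-1,0]$ is a small addition the paper leaves implicit.
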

\begin{proof}
We can overlook the $\frac{1}{d^2+\beta d}$ factor since it doesn't affect positivity.
\begin{equation}
    (\rho_w^{T_A})^{\otimes N} = I^{\otimes N} + \sum_{m=1}^N \beta^{m} \sum_{seq(m,N)}
    G^{i_1}\otimes G^{i_2}\otimes \cdots \otimes G^{i_N},
\end{equation}
where $seq(m,N)$ denotes all possible binary $i_1, \dots, i_N$ sequences with $m$ ones and $N-m$ zeros. 
For convenience denote $S_0 = \{n|i_n=0\},S_1 = \{n|i_n=1\}$.
\begin{footnotesize}
\begin{equation}
    \begin{aligned}
    &M_N G^{i_1}\otimes G^{i_2}\otimes \cdots \otimes G^{i_N} M_N^{\dagger}\\
    &=M_N \sum_{j_1 \dots j_{2N}} |\cdots j_{2n-1}j_{2n-i_n} \cdots\rangle \langle
    \cdots j_{2n-1+i_n} j_{2n} \cdots | M_N^{\dagger}\\
    &=\sum_{j_1 \dots j_{2N}} |\cdots j_{2n-1} \cdots,\cdots j_{2n-i_n} \cdots \rangle\langle
    \cdots j_{2n-1+i_n} \cdots,\cdots j_{2n} \cdots|\\
    &=d^m \sum_{j_{2n-1},j_{2n}, n\in S_0} |\psi_{j_{2n-1},j_{2n}, n\in S_0}\rangle
    \langle \psi_{j_{2n-1},j_{2n}, n\in S_0}|,
    \end{aligned}
\end{equation}
\end{footnotesize}
where
\begin{equation}
    \begin{aligned}
    &|\psi_{j_{2n-1},j_{2n}, n\in S_0}\rangle \\
    &= \frac{1}{\sqrt{d^m}} \sum_{j_{2n-1}, n\in S_1}
    |\dots j_{2n-1} \dots, \dots j_{2n-i_n} \dots \rangle.
    \end{aligned}
\end{equation}
Any pure quantum state $|\psi^{SR\leq 2}\rangle$ of Schmidt rank no larger than 2 can be decomposed into:
\begin{equation}
    |\psi^{SR\leq 2}\rangle = \sum_{j_{2n-1},j_{2n}, n\in S_0} p_{j_{2n-1},j_{2n}, n\in S_0}
    |\psi_{j_{2n-1},j_{2n}, n\in S_0}^{SR\leq 2}\rangle,
\end{equation}
where $|\psi_{j_{2n-1},j_{2n}, n\in S_0}^{SR\leq 2}\rangle$ is the normalized extraction of 
all $|\cdots j_{2n-1}\cdots, \cdots j_{2n}\cdots \rangle, n\in S_0$ terms in $|\psi^{SR\leq 2}\rangle$,
namely, if 
\begin{equation}
    |\psi^{SR\leq 2}\rangle = \sum_{k_1, \cdots, k_{2N}}c_{k_1, \cdots, k_{2N}}|k_1, \cdots, k_{2N}\rangle,
\end{equation}
then,
\begin{equation}
    \begin{aligned}
    &|\psi^{SR\leq 2}_{j_{2n-1},j_{2n},n\in S_0}\rangle \\
    &= \frac{1}{p_{j_{2n-1},j_{2n},n\in S_0}}\\
    &\sum_{j_{2n-1},j_{2n},n\in S_1} c_{\cdots j_{2n-1}\cdots, \cdots j_{2n}\cdots}
    |\cdots j_{2n-1}\cdots, \cdots j_{2n}\cdots\rangle,
    \end{aligned}
\end{equation}
where
\begin{equation}
    p_{j_{2n-1},j_{2n},n\in S_0} = \sqrt{\sum_{j_{2n-1},j_{2n},n\in S_1} |c_{\cdots j_{2n-1}\cdots, \cdots j_{2n}\cdots} |^2}
\end{equation}
It can be proved that $|\psi_{j_{2n-1},j_{2n}, n\in S_0}^{SR\leq 2}\rangle$ also has Schmidt rank
less than or equal to 2. According to state-operator isomorphism, Schmidt rank of a state is equal
to rank of the corresponding operator. In fact, the operator corresponding to 
$|\psi_{j_{2n-1},j_{2n}, n\in S_0}^{SR\leq 2}\rangle$ is the extraction of $d^m$
rows $\cdots j_{2n-1}\cdots, n\in S_0,$ and $d^m$ columns $\cdots j_{2n}\cdots, n\in S_0$. 
The row and column numbers
$\cdots j_{2n-1}\cdots$ and $\cdots j_{2n}\cdots$ are written in base-d numeral systems.
Therefore, operator corresponding to $|\psi_{j_{2n-1},j_{2n}, n\in S_0}^{SR\leq 2}\rangle$
is a submatrix of the operator corresponding to $|\psi^{SR\leq 2}\rangle$.
The rank of a submatrix is no larger than the rank of the entire matrix, therefore 
$|\psi_{j_{2n-1},j_{2n}, n\in S_0}^{SR\leq 2}\rangle$ must have Schmidt rank no larger than 2.
Using Lem.\ref{fact}, 
it is then clear that:
\begin{equation}
    |\langle\psi_{j_{2n-1},j_{2n}, n\in S_0}^{SR\leq 2}|\psi_{j_{2n-1},j_{2n}, n\in S_0}\rangle|^2
    \leq \frac{2}{d^m},
\end{equation}
and so 
\begin{equation}
    \begin{aligned}
    &\langle \psi^{SR\leq 2} | M_N G^{i_1}\otimes G^{i_2}\otimes \cdots \otimes G^{i_N} M_N^{\dagger} |
    \psi^{SR\leq 2} \rangle\\
    &\leq d^m \frac{2}{d^m} \sum_{j_{2n-1},j_{2n}, n\in S_0} |p_{j_{2n-1},j_{2n}, n\in S_0}|^2 = 2.
    \end{aligned}
\end{equation}
Considering the fact that all even terms in $(\rho_w^{T_A})^{\otimes N}$ are non-negative
and using the above inequality on all odd terms, a lower bound is obtained:
\begin{equation}
    \begin{aligned}
    &\langle \psi^{SR\leq 2} |M_N(\rho_w^{T_A})^{\otimes N}M_N^{\dagger}|\psi^{SR\leq 2} \rangle\\
    &\geq 1+\sum_{k=1}^{\lceil \frac{N}{2}\rceil}2 C_N^{2k-1} \beta^{2k-1} = 1 + (1+\beta)^N - (1-\beta)^N
    \end{aligned}
\end{equation}
For $\beta\in (-1, 0)$, any $\beta$ greater than the zero point $\beta_0$ would make
$1 + (1+\beta)^N - (1-\beta)^N$ greater than zero, thus ensuring the positivity of 
$\langle \psi^{SR\leq 2} |M_N(\rho_w^{T_A})^{\otimes N}M_N^{\dagger}|\psi^{SR\leq 2} \rangle$.

\end{proof}

\section{Proof of Theorem \ref{ineq1}}\label{ineq1_proof}
\begin{theorem}
    Werner state $\rho_{w}$ is N-undistillable iff the following holds for all $d^N\times d^N$ $X_2$:
    \begin{equation}
        \begin{aligned}
            \sum_{S\subset\{1,\cdots,N\}}\beta^{|S|}||Tr_S(X_2)||_F^2\geq 0,
        \end{aligned}
    \end{equation}
    where $Tr_S$ takes partial traces of the subsystems in set $S$.
\end{theorem}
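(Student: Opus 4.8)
The plan is to start from the characterization already recorded in Section~II (and reused in Section~IV): $\rho_w$ is $N$-undistillable iff $\langle\psi^{SR\leq 2}|M_N(\rho_w^{T_A})^{\otimes N}M_N^{\dagger}|\psi^{SR\leq 2}\rangle\geq 0$ for every Schmidt-rank-$\leq 2$ pure state on $\mathcal{H}_A^{\otimes N}\otimes\mathcal{H}_B^{\otimes N}$, and to rewrite this quantity, term by term, as the claimed sum of squared Frobenius norms. First I would expand $(\rho_w^{T_A})^{\otimes N}=(d^2+\beta d)^{-N}(I+\beta G)^{\otimes N}=(d^2+\beta d)^{-N}\sum_{S\subseteq\{1,\dots,N\}}\beta^{|S|}P^{(S)}$, where $P^{(S)}=\bigotimes_{n=1}^{N}P_n$ with $P_n=G$ for $n\in S$ and $P_n=I$ otherwise. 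For $-1\le\beta\le1$ and $d\ge2$ the scalar prefactor $(d^2+\beta d)^{-N}$ is strictly positive, so it can be dropped without affecting the sign, and by homogeneity of each term the normalization $\|X_2\|_F=1$ may also be dropped.

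The heart of the argument is to evaluate $\langle\psi^{SR\leq 2}|M_N P^{(S)}M_N^{\dagger}|\psi^{SR\leq 2}\rangle$ for a fixed subset $S$. Writing $X_2=\Psi(|\psi^{SR\leq 2}\rangle)$ and $|\psi^{SR\leq 2}\rangle=\sum_{a,b}(X_2)_{ab}|a\rangle|b\rangle$ with multi-indices $a=(a_1,\dots,a_N)$, $b=(b_1,\dots,b_N)$, the definition of $M_N$ shows that $M_N^{\dagger}$ merely regroups the tensor factors, $M_N^{\dagger}|\psi^{SR\leq 2}\rangle=\sum_{a,b}(X_2)_{ab}\bigotimes_{n=1}^{N}|a_n b_n\rangle$, now read in the $(\mathcal{H}_{A_n}\otimes\mathcal{H}_{B_n})$ ordering where $(\rho_w^{T_A})^{\otimes N}$ naturally acts. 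Since $G=\sum_{ij}|ii\rangle\langle jj|$, one has $P_n|a_n b_n\rangle=\delta_{a_n b_n}\sum_{i}|ii\rangle$ for $n\in S$; substituting this and pairing with the bra forces, for each $n\in S$, the $A$- and $B$-index on that factor to a common value and sums over it, which is precisely a partial trace on subsystem $n$. Carrying this out simultaneously over all $n\in S$ yields the exact identity $\langle\psi^{SR\leq 2}|M_N P^{(S)}M_N^{\dagger}|\psi^{SR\leq 2}\rangle=\|Tr_S(X_2)\|_F^2$, with no spurious factor of $d^{|S|}$ (the norm of the unnormalized maximally entangled vectors $\sum_i|ii\rangle$ is cancelled by the double Kronecker-delta constraint produced by the inner product). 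Summing over $S$ gives $\langle\psi^{SR\leq 2}|M_N(\rho_w^{T_A})^{\otimes N}M_N^{\dagger}|\psi^{SR\leq 2}\rangle\propto\sum_{S\subseteq\{1,\dots,N\}}\beta^{|S|}\|Tr_S(X_2)\|_F^2$.

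Finally I would invoke the remark following Definition~\ref{iso}: the state-operator isomorphism $\Psi$ is a linear bijection carrying Schmidt-rank-$\leq 2$ pure states on $\mathcal{H}_A^{\otimes N}\otimes\mathcal{H}_B^{\otimes N}$ onto rank-$\leq 2$ matrices $X_2$ of size $d^N\times d^N$, with Schmidt rank equal to matrix rank. Hence quantifying over all admissible $|\psi^{SR\leq 2}\rangle$ is the same as quantifying over all rank-$\leq 2$ $X_2$, and the displayed inequality is equivalent to $N$-undistillability. As a consistency check, for $N=1$ the sum reads $\|X_2\|_F^2+\beta|Tr(X_2)|^2\geq 0$, matching the $N=1$ formula obtained in Section~IV, and for $N=2$ it reproduces the three-term inequality written there.

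I expect the main obstacle to be the index bookkeeping in the middle step: one must verify carefully, with the $M_N$ reordering in place, that the joint action of the $G$ factors on the subsystems in $S$ implements exactly $Tr_S$ and introduces no extra scaling, for every $S$ at once. This is the same style of computation already performed as an \emph{inequality} in the proof of Theorem~\ref{beta} (Appendix~\ref{beta_proof}); here it must instead be pushed through as an exact equality, tracking the unnormalized maximally entangled vectors $\sum_i|ii\rangle$ and the normalizations they contribute.
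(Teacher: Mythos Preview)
Your proposal is correct and follows essentially the same route as the paper: expand $(I+\beta G)^{\otimes N}$ as a sum over subsets $S$, then show term by term that $\langle\psi^{SR\leq 2}|M_N P^{(S)}M_N^{\dagger}|\psi^{SR\leq 2}\rangle=\|Tr_S(X_2)\|_F^2$ via the index bookkeeping induced by $G=\sum_{ij}|ii\rangle\langle jj|$, and finally use the state--operator isomorphism to trade Schmidt-rank-$\le 2$ states for rank-$\le 2$ matrices. The paper presents the same computation through the intermediate projector decomposition onto the states $|\psi_{j_{2n-1},j_{2n},n\in S_0}\rangle$ already introduced in the proof of Theorem~\ref{beta}, whereas you collapse that step into a direct matrix-element calculation; the content is the same.
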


\begin{proof}
    We've already established that N-undistillability is equivalent to:
    \begin{equation}
        \langle \psi^{SR\leq 2} | M_N (\rho^{T_A})^{\otimes N} M_N^{\dagger} |\psi^{SR\leq2}\rangle \geq 0.
    \end{equation}
In the previous section it has been established that:
\begin{equation}
    (\rho_w^{T_A})^{\otimes N} = I^{\otimes N} + \sum_{m=1}^N \beta^{m} \sum_{seq(m,N)}
    G^{i_1}\otimes G^{i_2}\otimes \cdots \otimes G^{i_N},
\end{equation}
\begin{equation}
    \begin{aligned}
    &M_N G^{i_1}\otimes G^{i_2}\otimes \cdots \otimes G^{i_N} M_N^{\dagger}\\
    &=d^m \sum_{j_{2n-1},j_{2n}, n\in S_0} |\psi_{j_{2n-1},j_{2n}, n\in S_0}\rangle
    \langle \psi_{j_{2n-1},j_{2n}, n\in S_0}|,
    \end{aligned}
\end{equation}
where
\begin{equation}
    \begin{aligned}
    &|\psi_{j_{2n-1},j_{2n}, n\in S_0}\rangle \\
    &= \frac{1}{\sqrt{d^m}} \sum_{j_{2n-1}, n\in S_1}
    |\dots j_{2n-1} \dots, \dots j_{2n-i_n} \dots \rangle.
    \end{aligned}
\end{equation}
For a pure quantum state of the form:
\begin{equation}
    |\psi^{SR\leq 2}\rangle = \sum_{j_1, \cdots, j_N}
    c_{\cdots j_{2n-1}\cdots, \cdots j_{2n}\cdots}|\cdots j_{2n-1}\cdots, \cdots j_{2n}\cdots\rangle,
\end{equation}
\begin{equation}
    \begin{aligned}
        &\langle \psi^{SR\leq 2} |M_N G^{i_1}\otimes G^{i_2}\otimes \cdots \otimes G^{i_N} M_N^{\dagger}
        |\psi^{SR\leq2}\rangle\\
        &=d^m \sum_{j_{2n-1}, j_{2n}, n\in S_0} |\langle\psi_{j_{2n-1}, j_{2n}, n\in S_0}|\psi^{SR\leq 2}\rangle|^2\\
        &=\sum_{j_{2n-1}, j_{2n}, n\in S_0} |\sum_{j_{2n-1}, n\in S_1} c_{\cdots j_{2n-1} \cdots, \cdots j_{2n-i_n}
        \cdots}|^2\\
        &=||Tr_{\{i_n | n\in S_1\}}(X_2)||_F^2.
    \end{aligned}
\end{equation}
In the last equality, $X_2$ is the result of state-operator isomorphism from $|\psi^{SR\leq 2}\rangle$.
Noticing that
\begin{equation}
    \langle \psi^{SR\leq 2} |M_N I^{\otimes N} M_N^{\dagger}|\psi^{SR\leq 2}\rangle = 1 = \beta^0 ||X_2||_F^2,
\end{equation}
it then follows that
\begin{equation}
    \langle \psi^{SR\leq 2} | M_N (\rho^{T_A})^{\otimes N} M_N^{\dagger} |\psi^{SR\leq2}\rangle \geq 0,
\end{equation}
is equivalent to
\begin{equation}
    \begin{aligned}
        \sum_{S\subset\{1,\cdots,N\}}\beta^{|S|}||Tr_S(X_2)||_F^2 \geq 0,
    \end{aligned}
\end{equation}
\end{proof}

\section{Proof of Theorem \ref{ineq2}}\label{ineq2_proof}

\begin{theorem}
    Werner state $\rho_{w}$ is N-undistillable iff
    \begin{equation}
        Re[f_N(X_1, X'_1)]^2 \leq f_N(X'_1, X'_1)f_N(X_1, X_1),
      \end{equation}
      where
      \begin{equation}
        \begin{aligned}
          &f_N(X_1, X'_1) = \sum_{S\subset\{1,\cdots,N\}}\beta^{|S|} Tr[Tr^{\dagger}_{S}(X_1) Tr_{S}(X'_1)]\geq 0,
        \end{aligned}
      \end{equation}
      where $X_1=\mathbf{w}^{\dagger}\mathbf{x}, X'_1=\mathbf{y}^{\dagger}\mathbf{z}$
      are $d^N\times d^N$ rank one matrices, and their component vectors $\mathbf{w},\mathbf{x},\mathbf{y},
      \mathbf{z}$ satisfy $\mathbf{w}\perp\mathbf{y},\mathbf{x}\perp\mathbf{z}$.
\end{theorem}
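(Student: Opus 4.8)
The plan is to reduce Theorem~\ref{ineq2} to Theorem~\ref{ineq1} by showing that the rank-one reformulation, quantified over orthogonal component vectors, is exactly equivalent to the rank-two partial-trace inequality~\eqref{needed}. The guiding observation is that an arbitrary rank-$\le 2$ matrix $X_2$ decomposes, via its singular value decomposition, into a sum of two rank-one matrices built from mutually orthogonal left- and right-singular vectors, which is precisely the structure $X_1 = \mathbf{w}^\dagger\mathbf{x}$, $X'_1 = \mathbf{y}^\dagger\mathbf{z}$ with $\mathbf{w}\perp\mathbf{y}$, $\mathbf{x}\perp\mathbf{z}$. Conversely, any such pair of orthogonally-supported rank-one matrices sums to a rank-$\le 2$ matrix. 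So the two quantifier ranges match up, and the whole proof is about matching the two \emph{expressions}.

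\textbf{Step 1: identify $f_N$ as a sesquilinear form underlying the quadratic form in~\eqref{needed}.} First I would expand $\|Tr_S(X_2)\|_F^2 = Tr[Tr_S^\dagger(X_2)Tr_S(X_2)]$ for $X_2 = X_1 + X'_1$ using bilinearity of the partial trace: this gives $\|Tr_S(X_1)\|_F^2 + \|Tr_S(X'_1)\|_F^2 + 2\,Re\,Tr[Tr_S^\dagger(X_1)Tr_S(X'_1)]$. Summing against $\beta^{|S|}$ over all $S\subset\{1,\dots,N\}$ and recognizing the definition of $f_N$, the left side of~\eqref{needed} becomes exactly $f_N(X_1,X_1) + f_N(X'_1,X'_1) + 2\,Re[f_N(X_1,X'_1)]$. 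Here I should note that for a single rank-one argument $A$, $f_N(A,A) = \sum_S \beta^{|S|}\|Tr_S(A)\|_F^2 \ge 0$ by Lemma~\ref{lem1} applied subsystem-by-subsystem (this is the positivity claim attached to the statement of $f_N$), and that $f_N$ is conjugate-symmetric by inspection.

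\textbf{Step 2: rescale and optimize over the relative weights.} Because~\eqref{needed} is homogeneous, I may replace $X_1\to s\,X_1$, $X'_1\to t\,X'_1$ with $s,t\ge 0$ and retain equivalence; then $X_2 = sX_1 + tX'_1$ still has rank $\le 2$ and, when $\mathbf w\perp\mathbf y$, $\mathbf x\perp\mathbf z$, every rank-$\le 2$ matrix arises this way (absorb the singular values into $s,t$). Theorem~\ref{ineq1} applied to all such $X_2$ thus says
\begin{equation}
  s^2 f_N(X_1,X_1) + t^2 f_N(X'_1,X'_1) + 2st\,Re[f_N(X_1,X'_1)] \ge 0 \quad \text{for all } s,t\ge 0.
\end{equation}
Since the two ``diagonal'' coefficients are already nonnegative, this quadratic-in-$(s,t)$ is nonnegative on the nonnegative quadrant if and only if it is nonnegative everywhere, which by the standard discriminant condition for a positive-semidefinite $2\times 2$ form is exactly $Re[f_N(X_1,X'_1)]^2 \le f_N(X_1,X_1)f_N(X'_1,X'_1)$. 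Running the implications in both directions establishes the stated equivalence.

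\textbf{The main obstacle} is making the quantifier correspondence airtight in both directions: that \emph{every} rank-$\le 2$ matrix $X_2$ factors as $sX_1+tX'_1$ with the required orthogonality (this is just the SVD, with the caveat that rank-$0$ and rank-$1$ cases are covered by letting $s$ or $t$ vanish or by choosing degenerate orthogonal completions), and conversely that $X_1=\mathbf w^\dagger\mathbf x$ with $\mathbf w\perp\mathbf y$, $\mathbf x\perp\mathbf z$ really forces $\mathrm{rank}(X_1+X'_1)\le 2$ with those vectors serving as (unnormalized) singular data — one must check that orthogonality of $\mathbf w,\mathbf y$ and of $\mathbf x,\mathbf z$ is what lets the Frobenius cross-terms and partial-trace cross-terms be read off consistently, i.e. that no hidden interference terms survive. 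The reduction to $s,t\ge 0$ rather than all real $s,t$ is harmless precisely because of the $f_N(A,A)\ge 0$ positivity, so I would make sure that point is stated cleanly before invoking the discriminant. Everything else is routine bilinear bookkeeping.
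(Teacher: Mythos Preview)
Your proposal is correct and follows essentially the same route as the paper's proof: start from Theorem~\ref{ineq1}, decompose $X_2$ via SVD into two orthogonal rank-one pieces, expand the resulting quadratic in the singular values, and read off the discriminant condition. You are in fact more careful than the paper about the bidirectional quantifier correspondence and about why the diagonal terms $f_N(X_1,X_1)$, $f_N(X'_1,X'_1)$ must be nonnegative; the paper simply asserts the latter in the main text and passes directly from the quadratic-in-$(\sigma_1,\sigma_2)$ inequality to the discriminant form without further comment.
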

\begin{proof}
    It has been proven that N-undistillability is equivalent to:
    \begin{equation}\label{pre}
        \begin{aligned}
            \sum_{S\subset\{1,\cdots,N\}}\beta^{|S|}||Tr_S(X_2)||_F^2\geq 0,
        \end{aligned}
    \end{equation}
    Since any rank two matrix can be decomposed via singular value decomposition:
    \begin{equation}
        X_2 = \sigma_1 X_1 +\sigma_2 X'_1,
    \end{equation}
    Eq.\ref{pre} can be written as:

    \begin{equation}\label{c5}
        \begin{aligned}
            &\sum_{S\subset\{1,\cdots,N\}}\beta^{|S|}||Tr_S(X_2)||_F^2\\
            &=\sum_{S\subset\{1,\cdots,N\}}\beta^{|S|}[\sigma^2_1 ||Tr_{S}(X_1)||_F^2+\sigma_2^2||Tr_{S}(X'_1)||_F^2\\
            &+2\sigma_1\sigma_2 Re(Tr(Tr_S(X_1)^{\dagger}Tr_S(X'_1)))]\\
            &=\sigma_1^2 f_N(X_1, X_1) + \sigma_2^2 f_N(X'_1, X'_1) \\
            &+ 2\sigma_1\sigma_2 Re[f_N(X_1, X'_1)]\\
            &\geq 0. \\
        \end{aligned}
    \end{equation}
    The above should hold for all singular values $\sigma_1,\sigma_2$, which is then
    equivalent to:
    \begin{equation}
        Re[f_N(X_1, X'_1)]^2 \leq f_N(X_1, X_1)f_N(X'_1, X'_1),
    \end{equation}
    always holding when $X_1=\mathbf{w}^{\dagger}\mathbf{x}, X'_1=\mathbf{y}^{\dagger}\mathbf{z}$
    are $d^N\times d^N$ rank one matrices, and their component vectors $\mathbf{w},\mathbf{x},\mathbf{y},
    \mathbf{z}$ satisfy $\mathbf{w}\perp\mathbf{y},\mathbf{x}\perp\mathbf{z}$.
\end{proof}

\section{Proof of Lem. \ref{lem1}}\label{lem1_proof}

\begin{lemma}
    The inequalities:
    \begin{equation}
        ||Tr_1(X_1)||_F^2 \leq ||X_1||_F^2, ||Tr_2(X_1)||_F^2 \leq ||X_1||_F^2,
    \end{equation}
    hold for any $d^2\times d^2$ square matrix $X_1$ of rank one.
    \end{lemma}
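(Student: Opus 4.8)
The plan is to push everything through the state--operator isomorphism of Definition~\ref{iso}, exactly as in Section~\RomanNumeralCaps{5}. Since $X_1$ has rank one I would write it as an outer product $X_1=\mathbf{u}\mathbf{v}^{\dagger}$ of two $d^2$-dimensional vectors, and let $U,V$ be the $d\times d$ matrices obtained by applying $\Psi$ to $\mathbf{u}$ and $\mathbf{v}$ (the same construction that produces $U_1,V_1,\dots$ there). Then $||X_1||_F^2=||\mathbf{u}||^2\,||\mathbf{v}||^2=||U||_F^2\,||V||_F^2=Tr(U^{\dagger}U)\,Tr(V^{\dagger}V)=Tr(UU^{\dagger})\,Tr(VV^{\dagger})$, so the right-hand side of each of the two claimed inequalities is handled in the same way.

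Next I would evaluate the two partial traces in components. Writing the row index of $X_1$ as a pair $(i_1,i_2)$ and the column index as $(i_3,i_4)$ we have $(X_1)_{(i_1i_2),(i_3i_4)}=U_{i_1i_2}\overline{V_{i_3i_4}}$, and the same short index manipulation that underlies Section~\RomanNumeralCaps{4} (where $||Tr_2(\cdot)||_F^2$ and $||Tr_1(\cdot)||_F^2$ are the expectation values of $E(I\otimes G)E^{\dagger}$ and $E(G\otimes I)E^{\dagger}$) gives
\begin{equation}
  ||Tr_2(X_1)||_F^2=Tr(U^{\dagger}U\,V^{\dagger}V),\qquad ||Tr_1(X_1)||_F^2=Tr(UU^{\dagger}\,VV^{\dagger}).
\end{equation}
Combining with the previous paragraph, the lemma is reduced to the two inequalities $Tr(U^{\dagger}U\,V^{\dagger}V)\le Tr(U^{\dagger}U)\,Tr(V^{\dagger}V)$ and $Tr(UU^{\dagger}\,VV^{\dagger})\le Tr(UU^{\dagger})\,Tr(VV^{\dagger})$.

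Both are the same elementary fact, namely $Tr(PR)\le Tr(P)\,Tr(R)$ for arbitrary positive semidefinite $P,R$, taken with $(P,R)=(U^{\dagger}U,V^{\dagger}V)$ and with $(P,R)=(UU^{\dagger},VV^{\dagger})$. I would prove this by diagonalizing $P=\sum_a p_a|a\rangle\langle a|$, $R=\sum_b r_b|b\rangle\langle b|$ with $p_a,r_b\ge0$, so that
\begin{equation}
  Tr(PR)=\sum_{a,b}p_a r_b\,|\langle a|b\rangle|^2\le\sum_{a,b}p_a r_b=Tr(P)\,Tr(R),
\end{equation}
using $|\langle a|b\rangle|^2\le1$; equivalently $Tr(PR)\le||R||_{\mathrm{op}}\,Tr(P)\le Tr(R)\,Tr(P)$. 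There is no real obstacle in this lemma: it is a direct computation, and the only points needing care are matching $Tr_1,Tr_2$ to the correct subsystems in the index bookkeeping, and observing that it is precisely the rank-one hypothesis that removes the factor $\sqrt{d}$ which would otherwise appear in the general bound $||Tr_j(X)||_F\le\sqrt{d}\,||X||_F$.
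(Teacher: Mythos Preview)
Your proof is correct. Both you and the paper write $X_1$ as an outer product of two $d^2$-vectors and compute the partial traces, but the endgame differs. The paper stays at the component level and applies Cauchy--Schwarz directly to the inner sum, $|\sum_j w_{ij}x_{kj}|^2\le(\sum_j|w_{ij}|^2)(\sum_l|x_{kl}|^2)$, and then sums over the remaining indices. You instead pass through the state--operator isomorphism, recognize the two partial-trace norms as $Tr(U^{\dagger}U\,V^{\dagger}V)$ and $Tr(UU^{\dagger}\,VV^{\dagger})$, and finish with the positive-semidefinite trace inequality $Tr(PR)\le Tr(P)\,Tr(R)$. Your route is a touch more structural and has the pleasant side effect of tying directly into the $U_i,V_i$ machinery already used in Section~\RomanNumeralCaps{5} for the rank-two analysis; the paper's version is more bare-hands but spares one the need to verify the identifications $Tr_2(X_1)=UV^{\dagger}$ and $Tr_1(X_1)=U^{T}\overline{V}$ (your one caveat about index bookkeeping is well taken, though harmless since both inequalities are handled by the same argument).
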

\begin{proof}
A rank one matrix $X_1$ can always be written as the outer product of two vectors: (in this case, bipartite)
\begin{equation}
    (X_1)_{ij,kl} = w_{ij}x_{kl}.
\end{equation}
Its partial traces can be computed accordingly:
\begin{equation}
    (Tr_2(X_1))_{i,k} = \sum_j w_{ij}x_{kj}, (Tr_1(X_1))_{j,l} = \sum_i w_{ij}x_{il}.
\end{equation}
A direct calculation and application of Cauchy-Schwartz inequality yields the desired result:
\begin{equation}
    \begin{aligned}
    &||Tr_2(X_1)||_F^2 = \sum_{ik}|\sum_j w_{ij}x_{kj}|^2 \leq \sum_{ik} \sum_j |w_{ij}|^2\sum_l |x_{kl}|^2\\
    &=\sum_{ijkl}|w_{ij}|^2|x_{kl}|^2 = ||X_1||_F^2,
    \end{aligned}
\end{equation}
\begin{equation}
    \begin{aligned}
    &||Tr_1(X_1)||_F^2 = \sum_{jl}|\sum_i w_{ij}x_{il}|^2 \leq \sum_{jl} \sum_i |w_{ij}|^2\sum_k |x_{kl}|^2\\
    &=\sum_{ijkl}|w_{ij}|^2|x_{kl}|^2 = ||X_1||_F^2.
    \end{aligned}
\end{equation}
\end{proof}

\section{Proving non-convexity}\label{non_con_proof}

We now prove that the function of $g(C)_{D_0}$ is non-convex,
by showing that its local minimum set is not a convex one.
For 
\begin{equation}
    C = \mathbf{w}\mathbf{x}^T, D_0 = \mathbf{y}\mathbf{z}^T,
\end{equation}
we set vectors $\mathbf{y}$ and $\mathbf{z}$ to identical forms of:
\begin{equation}
    y_{ij} = z_{ij} = \delta_{i0}\delta_{j1}
\end{equation}
For simplicity we first write the variables $\mathbf{w}, \mathbf{x}$ in a matrix form of:
$$
\begin{pmatrix}
 w_{00} & w_{01} & \cdots & w_{d-1,d-1}\\
 x_{00} & x_{01} & \cdots & x_{d-1,d-1}\\
\end{pmatrix},
$$
where the two rows correspond to two vectors $\mathbf{w}$, $\mathbf{x}$, respectively.
We take the middle point combination of the following two points:
$$
\begin{pmatrix}
 0 & 1 & \cdots 0\\
 0 & 1 & \cdots 0\\
\end{pmatrix},
\begin{pmatrix}
  1 & 0 & \cdots 0\\
  1 & 0 & \cdots 0\\
 \end{pmatrix},
$$
both of these points can be proven to have zero Jacobian matrix and positive definite Hessian, effectively
making them local minimums.

For a convex function, its local minimum set must be a convex set, so that middle point combination
of any two points should stay in the set.
Their middle point combination is:
$$
\begin{pmatrix}
 1 & 1 & 0 &  \cdots 0\\
 1 & 1 & 0 & \cdots 0\\
\end{pmatrix}.
$$
The normalization factor doesn't change whether the Jacobian matrix is nonzero or not, and therefore
is interchangeable and omitted here.
At this particular point, the Jacobian matrix is nonzero and proportional to:
\begin{equation}
    [1, 1, 0, 0, 0, 0, 0, 0, 0, 1, 1, 0, 0, 0, 0, 0, 0, 0],
\end{equation}
which suggests that the middle
point is not a local minimum, thus proving the non-convexity of the set of local minimums.
It then follows that the function of $g(C, D)$ is non-convex.

\end{document}